\DeclareMathOperator*{\argmax}{maximize}
\newtheorem{theorem}{\bf Theorem}
\newtheorem{proposition}{\bf Proposition}
\newtheorem{definition}{\bf Definition}
\newtheorem{remark}{\bf Remark}
\newcounter{step}
\newlength{\totlinewidth}
  {\end{list}%
  \rule{\linewidth}{1pt}}
\newcounter{substep}
\newlength{\aligntop}
\newlength{\alignbot}
\renewenvironment{align}{%
  \vspace{\aligntop}
  \start@align\@ne\st@rredfalse\m@ne
}{%
  \math@cr \black@\totwidth@
  \egroup
  \ifingather@
    \restorealignstate@
    \egroup
    \nonumber
    \ifnum0=`{\fi\iffalse}\fi
  \else
    $$%
  \fi
  \ignorespacesafterend%
  \vspace{\alignbot}\par\noindent
} \makeatother
\begin{document}
\clearpage
\title{\huge Context-Aware Small Cell Networks: How Social Metrics Improve Wireless Resource Allocation}
%
\author{{Omid Semiari$^{1}$}, Walid Saad$^{1}$, Stefan Valentin$^{2}$, Mehdi Bennis$^{3}$, and H. Vincent Poor$^{4}$\vspace*{0em}\\
\authorblockA{\small $^{1}$Wireless@VT, Bradley Department of Electrical and Computer Engineering, Virginia Tech, Blacksburg, VA, USA,\\ Email: \protect\url{osemiari@vt.edu, walids@vt.edu} \\
$^{2}$Huawei Technologies, Paris, France, Email: \protect\url{stefanv@ieee.org}\\
$^{3}$Center for Wireless Communications-CWC, University of Oulu, Finland, Email: \protect\url{bennis@ee.oulu.fi}\\
$^{4}$Electrical Engineering Department, Princeton University, Princeton, NJ, USA, Email: \protect\url{poor@princeton.edu}\\
    \thanks{This research was supported by the U.S. National Science Foundation under Grants CNS-1460316 and CNS-1513697.}%
  }

%
%
%
%
\maketitle
\thispagestyle{empty}
\vspace{0cm}
\begin{abstract}
In this paper, a novel approach for optimizing and managing resource allocation in wireless small cell networks (SCNs) with device-to-device (D2D) communication is proposed. The proposed approach allows to jointly exploit both the wireless and social \emph{context} of wireless users for optimizing the overall allocation of resources and improving traffic offload in SCNs. This context-aware resource allocation problem is formulated as a matching game in which user equipments  (UEs) and resource blocks (RBs) rank one another, based on utility functions that capture both wireless and social metrics. Due to social inter-relations, this game is shown to belong to a class of matching games with peer effects. To solve this game, a novel, self-organizing algorithm is proposed, using which UEs and RBs can interact to decide on their desired allocation. The proposed algorithm is then proven to converge to a two-sided stable matching between UEs and RBs. The properties of the resulting stable outcome are then studied and assessed. Simulation results using real social data show that clustering of socially connected users allows to offload a substantially larger amount of traffic than the conventional context-unaware approach. These results show that exploiting social context has high practical relevance in saving resources on the wireless links and on the backhaul.
\end{abstract}
\vspace{-0cm}
{\small \emph{Index Terms}--- wireless small cell networks; matching games; heterogeneous networks; game theory.}

\vspace{-0cm}
\section{Introduction}
\label{sec:intro}
The introduction of smartphones and tablets has led to the proliferation of bandwidth-intensive wireless services, such as multimedia streaming and social networking, that have strained the capacity of present-day wireless communication networks \cite{17}. This increasing trend led to the emergence of wireless small cell networks (SCNs) as a promising solution to meet the quality-of-service (QoS) requirements of such emerging wireless services\cite{1,2,14,70}. In SCNs, the main idea is to massively deploy small cell base stations (SCBSs) with relatively low transmit power, overlaid on existing cellular infrastructure. Small cells allow to increase the capacity and coverage of a wireless network by bringing the user equipments (UEs) closer to their serving base stations. Nonetheless, the deployment of small cells introduces new challenges in terms of interference management, resource allocation, and network modeling. These challenges stem from many key features of SCNs such as the unplanned SCBS distribution, limited coverage, dense SCBS deployment, and limited backhaul capacities, among others \cite{1,2,14,70,74,50,82,84}.

The authors in \cite{74} proposed a control-based scheduler for traffic management at small cells. In \cite{50}, an optimization problem is solved at each cell to perform resource allocation while taking into account cell range expansion and offloading metrics.
Most of these existing approaches mainly adopt centralized methods for resource allocation \cite{74,50}. Although interesting, such centralized approaches have several drawbacks since they assume the presence of a centralized controller for the small cells, depend on SCBSs cooperation, and require macro base station (MBS) coordination. However, resource allocation in SCNs needs to be decentralized, self-organizing, and computationally efficient; specifically when the number of small cells increases. In this regard, game theory has emerged as a popular tool to realize distributed approaches for wireless networks \cite{27,84,89,94,28,82}. In \cite{82}, the authors proposed a distributed resource allocation in the uplink of a two tier network, by posing the problem as a matching game. They solved the game using the Hungarian algorithm. In \cite{84}, the resource allocation in SCNs is formulated as an evolutionary game. In \cite{27}, the theory of one-to-one and many-to-one matching markets is extended for the resource allocation in wireless networks. In \cite{89}, the authors used matching theory to perform distributed scheduling at the downlink of a MIMO-OFDMA system. Other works that apply matching in some limited wireless settings are found in \cite{94} and \cite{28}. In fact, prior works do not handle the challenges of SCNs that are underlaid with device-to-device (D2D) connections and in which there is a need not only to manage interference, but to also account for redundant transmissions by exploiting the ability of D2D to provide popular content caching. The body of work in \cite{89,27,94,28,84,82} focuses on resource allocation while only accounting for classical physical layer metrics such as the SINR and is restricted to networks without D2D. In addition, it is based on the classical deferred acceptance algorithm which cannot be applied for scenarios with peer effects such as in our case. Context-aware resource allocation, as done in our paper, is a new design paradigm that can help to boost the performance of small cell networks and to exploit D2D for popular content distribution.


Along with the use of SCNs for improving network performance, D2D communications has recently emerged as an interesting approach to provide proximity services to users of an SCN, thus assisting in further offload of the cellular system's traffic \cite{65,66,67,87,91}. Indeed, due to the evolution of numerous data centric applications, it is very likely that devices in proximity of one another tend to interact directly over the wireless spectrum. Communication of such neighboring devices via the infrastructure of the SCN (i.e., via SCBSs) is neither spectrum nor power efficient \cite{65}. In addition, SCNs are envisioned to have a capacity-limited backhaul \cite{14} and, thus, the use of underlaid D2D can help offload traffic from the SCNs' backhaul. In this respect, D2D communication over the cellular spectrum is viewed as an attractive candidate to handle these scenarios\cite{93,87,91}. D2D over cellular networks is significantly different from D2D over unlicensed bands such as WiFi or traditional short-range D2D via Zigbee and Bluetooth. Indeed, D2D over cellular allows longer ranges and  higher QoS, while also requiring to properly manage interference with cellular transmissions \cite{66,67}, and \cite{93}. Some of the main challenges associated with deploying the D2D technology include introducing proximity services that leverage D2D, managing the wireless resources in D2D deployments, and protecting such low power and vulnerable communication links from interference \cite{65,66,67,87}.

In addition to the conventional physical layer metrics to optimize the SCN's performance, modern UEs can offer a versatile range of information from higher network layers that could help to reap the prospective gains of D2D deployments in SCNs.
Such additional information, referred to as \emph{context information}, may include the data extracted from online social networks \cite{75,59,80,81}, the history of a user's throughput \cite{56}, prediction of a user's location \cite{58}, or the delay-throughput tolerance of the applications \cite{55}.
The work in \cite{75} develops an analytical model for the epidemic information spreading among mobile users of an ad hoc network. In \cite{59}, the resource allocation in wireless LAN is defined as an optimization problem, taking the notion of social distance into account. The authors in \cite{80} and \cite{81} extend the work in \cite{59} by introducing new utility functions which again account for the social distance of users, extracted from the social graph. Existing context-aware works \cite{56,55, 58,75,59,80,81} are mostly tailored to macrocell networks, are based on centralized approaches, and do not address the SCN or D2D over SCN challenges. In addition, although the use of social networks has been demonstrated to be useful to improve wireless systems, most existing works such as \cite{75,59,80,81} are based solely on the physical aspects of the social network, e.g., centrality measures. Such notion of social context is insufficient to capture common interests. For example, the large number of friends of a user in a social network does not necessarily mean that such a user is influential enough to require more bandwidth. In contrast, there is a need to adopt a more holistic view for the social context by basing it on other social dimensions such as the actual interactions between users. Here, we note that, although another body of works such as \cite{61,85,32,33} has further explored the use of social metrics in networking applications, such works are not adequate for deployment in wireless cellular systems such as SCNs with D2D as they do not deal with issues such as interference and network offload.

The main contribution of this paper is to propose a novel, self-organizing, context-aware framework for optimizing resource allocation in D2D-enabled SCNs. We formulate the problem as a two-sided one-to-one matching game in which each UE is assigned to one resource block (RB). In this game, the UEs and SCBS-controlled RBs rank one another based on utilities that capture the social context of the users as well as the wireless physical layer metrics. The social context includes the information inferred from the social network profiles of the wireless users. This information is mainly based on the similarities between users' interests, activities, and their interactions such as tagging or wall posting. The proposed scheme allows to exploit the fact that users who are strongly connected in a social network are likely to request similar type of data over the physical wireless network. We show that the proposed game is a matching game with \emph{peer effects} in which the strategy of each player is affected by the decisions of its peers. This is in contrast to most existing works on matching theory for wireless networks \cite{84,89,27,94,28,82} that deal with conventional matching games in which there is no peer effect. To solve this context-aware resource management game, we propose a novel, self-organizing algorithm that allows to find a stable matching between users and RBs. We show that our proposed algorithm allows the SCBSs and UEs to interact and converge to a stable matching with manageable complexity. Simulation results using real traces are used to analyze the performance of the proposed approach.

The rest of the paper is organized as follows. Section \ref{sec:SM} describes the system model. Section \ref{sec:3} introduces the modeling of social context in wireless D2D-enabled SCNs. Section \ref{sec:utility} defines the problem as the matching game and Section \ref{sec: Algorithm} presents the proposed algorithm. Simulation results are analyzed in Section VI and conclusions are drawn in Section VII.
\vspace{-0cm}
\section{System Model}
\label{sec:SM}
Consider the downlink of an OFDMA small cell network with a set $\mathcal{L}$ of $L$ SCBSs randomly distributed within the network. The total bandwidth $B$ is divided into $N$ RBs in the set $\mathcal{N}$ and there are a total of $M$ active users with $\mathcal{M}$ being the set of all users. We consider a co-channel network deployment in which the total bandwidth is shared between all small cells. In this network, we assume that users can communicate directly via D2D communication links within the cellular band. Such D2D communications enhance the indoor coverage and helps to offload the small cell traffic. In our model, some users are chosen as a serving user equipment (SUE) that are allowed to serve other UEs via D2D communication. Let ${\mathcal{M}_s}$ be the set of $M_s$ SUEs and ${\mathcal{M}_u}$ be the set of $M_u$ non-serving UEs. Thus, $\mathcal{M}={\mathcal{M}_s}\cup {\mathcal{M}_u}$ and ${\mathcal{M}_s}\cap {\mathcal{M}_u}=\emptyset$. The criterion for SUE selection is discussed further in Section \ref{SC}. Moreover, let $\mathcal{K}={\mathcal{M}_s}\cup {\mathcal{L}}$ be the joint set of all SCBSs and SUEs with $|\mathcal{K}|=M_s+L$. Hereinafter, we use the term ``serving node (SN)'' to refer to either an SCBS or an SUE. Moreover, we refer to cellular links and D2D links, respectively, as SCBSs to UEs and SUEs to UEs links.

For resource allocation in D2D-enabled SCNs, one simple approach is to allow the SCBSs to share all the RBs with the SUEs. However, in such a scheme, the D2D communication links will be dominated by the interference from the SCBSs. To overcome this problem, we propose to divide the spectrum in such a way that no mutual interference occurs between SCBSs and SUEs. Consequently, the sources of interference and the SINR relations will differ at each RB $n \in \mathcal{N}$, depending on whether RB $n$ is reused by an SCBS or an SUE.
\begin{figure}[!t]
  \begin{center}
   \vspace{0cm}
    \includegraphics[width=8cm]{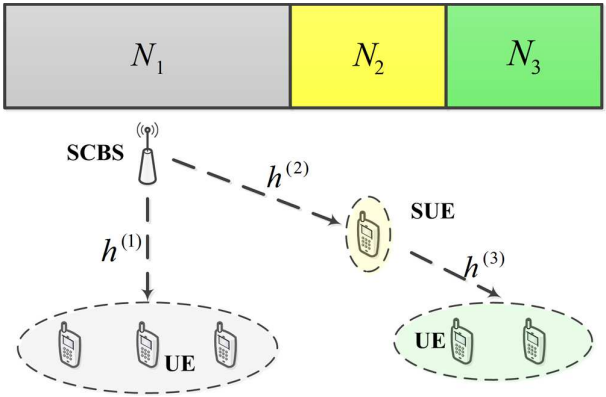}
    \vspace{-0cm}
    \caption{\label{fig:channel} Physical model for interference management in a D2D enabled SCN.}
  \end{center}\vspace{0cm}
\end{figure}
In this model, D2D links, SCBS to SUE links, and cellular links are separated in the frequency domain. Hence, the set of resource blocks $\mathcal{N}$, is divided into non-overlapping sets, namely, $\mathcal{N}_1$, $\mathcal{N}_2$, and $\mathcal{N}_3$ as shown in Fig. \ref{fig:channel}. $\mathcal{N}_1$ and $\mathcal{N}_2$ represent the set of $N_1$ and $N_2$ RBs dedicated to the direct links from SCBSs to UEs and SCBSs to SUEs, respectively. In addition, $\mathcal{N}_3$ is the set of $N_3$ dedicated RBs that are shared by all SUEs for D2D transmission. We let $h_{knm}$ be the channel state of subcarrier $n \in \mathcal{N}$ in the transmission from SN $k$ to user $m$. In this model, SCBSs may interfere with one another, since they share subbands $\mathcal{N}_1$ and $\mathcal{N}_2$. However, SCBSs will not interfere with D2D links as SUEs and SCBSs transmit on two different orthogonal bands. This encourages UEs to be served via D2D links which can improve the offloading capabilities of the network.

The achievable rate for the transmission between an SN $k \in \mathcal{K}$ and a user $m \in \mathcal{M}$ over RB $n \in \mathcal{N}_i$ is\vspace{-0cm}
\begin{align}\label{eq:1}
\Phi_{knm}(\gamma_{knm}^{(i)})=w_{n}\log(1+\gamma_{knm}^{(i)}),
\end{align}
where $w_{n}$ is the bandwidth of RB $n$ and $\gamma_{knm}^{(i)}$ is the instantaneous SINR for user $m$ from SN $k$ when using RB $n$. The superscript $i \in \left\{1,2,3\right\}$ indicates the set of RBs to which RB $n$ belongs. For $i \in \left\{1,2\right\}$ we have\vspace{-0cm}
\begin{align}\label{eq:2a}
\gamma_{lnm}^{(i)}=\frac{p_{ln}h_{lnm}}{\sum_{l' \in \mathcal{L}, l'\neq l} p_{l'n}h_{l'nm}+\sigma^{2}},
\end{align}
where $p_{ln}$ denotes the transmit power of SCBS $l$ over RB $n$. Moreover, $m$ corresponds to an arbitrary UE and SUE, respectively, for $i=1$ and $i=2$. For the transmissions over $\mathcal{N}_3$, the SINR is given by:\vspace{-0cm}
\begin{align}\label{eq:2b}
\gamma_{m_snm}^{(3)}=\frac{p_{m_sn}h_{m_snm}}{\sum_{m_s' \in \mathcal{M}_s, m_s'\neq m_s} p_{m_s'n}h_{m_s'nm}+\sigma^{2}},
\end{align}\vspace{-0cm}
where $p_{m_sn}$ denotes the transmit power of SUE $m_s$ over RB $n$ and $\sigma^2$ is the variance of the receiver's Gaussian noise.

Given this model, one important problem is how to allocate the bandwidth resources to the wireless users. As discussed in Section \ref{sec:intro}, beyond power allocation and interference management techniques, we can boost the capacity of wireless networks by making the network better informed of its environment. Recent studies \cite{32,33,61} have shown that friends in social networks, e.g. Facebook, have many common interests and activities that define their so-called \emph{social tie}. Such social ties' strength could properly show how frequently people interact with their friends, share popular videos or pictures, or invite one another to activities of common interest. Therefore, such interrelationships can explain how often socially connected people request common contents \cite{75,59,80,81,87}. Observing such behavior is interesting for SCN resource allocation, since it motivates the possibility of serving a user directly by other users with shared interests over D2D communication, instead of requesting the content from SCBSs. Therefore, such scheme allows the network to decrease redundant transmissions and offload this traffic from the backhaul network. In fact, we are investigating a \emph{content distribution} model that allows the network to use certain devices as SUEs, to serve as ``caching points'' whose storage can be used to cache popular content via overlay D2D. Therefore, our model is not a classical cooperative communication or relaying system.
\begin{figure}[!t]
  \begin{center}
   \vspace{0cm}
   \includegraphics[width=\columnwidth]{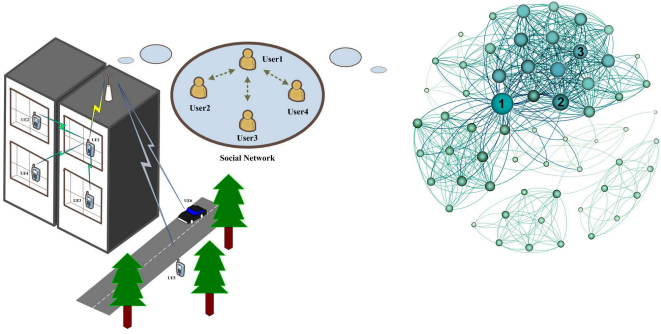}
    \vspace{-.3cm}
    \caption{\label{fig1} A schematic of a D2D enabled SCN which exploits the context information underneath the social network. The network in the right hand side shows the Facebook friendship graph of user 4.
    }
  \end{center}\vspace{0cm}
\end{figure}

For example, consider the scenario shown in Fig. \ref{fig1}, where a group of friends, e.g. students in a dorm or coworkers of a company spend a significant amount of time each day in neighboring rooms. Beyond this physical closeness, there is a social relationship between users at a higher layer that can underline their common interests in various topics such as sports or media. Since these users might have mutual interests, they are likely to be interested in common contents. Hence, although the formation of social ties is an application oriented metric, however, it strongly impacts how they access their wireless services, thus, directly impacting resource allocation. One illustrative example is the case in which one user, say user $1$, shares a certain video on the social network, which, in turn, will be viewed by some of its friends, due to the mutual interests. Hence, those users could be served using data that may be cached at UE $1$, directly through the D2D link. Clearly, by knowing the social ties between the users, the network will, on the one hand, be able to avoid multiple transmission of the same data and, on the other hand, will be able to allocate additional resources to the users outside the social group. In this work, we use the term \textit{traffic offload} to refer to the reduction of redundant transmissions from SCBSs to UEs that can be obtained by exploiting D2D links between SUEs and UEs. Such traffic offload will alleviate the traffic on the backhaul-constrained SCBSs while also allowing to service additional users over the SCBSs' RBs \cite{91}.

With this in mind, we propose to exploit, jointly with conventional channel information, the users' social interrelationships in order to optimize resource allocation in D2D-enabled SCNs.
The resource allocation can be posed as an optimization problem in which RBs are assigned to UEs ($\xi^{\star}:\mathcal{N}\to\mathcal{M}$) such that the overall sum utility of the network is maximized.
Taking the social context into account, we can formulate the problem as
\begin{align}
&\argmax_{\xi^{\star}} \,\sum_{k\in \mathcal{K}}\sum_{n\in\mathcal{N}}\sum_{m\in\mathcal{M}} \xi_{knm}\Omega_m (\Phi_{knm}(\gamma_{knm}),\boldsymbol{Z}),\label{opt11}\\\vspace{-.2cm}
&\text{subject to}\,\,\,\,\sum_{k\in \mathcal{K}}\sum_{m\in\mathcal{M}}\xi_{knm} \leq 1, \,\,\,\,\,   \forall n \in \mathcal{N},\label{opt12}\\\vspace{-.2cm}
&\,\,\,\,\,\,\,\,\,\,\,\,\,\,\,\,\,\,\,\,\,\,\,\,\,\,\,\sum_{k\in \mathcal{K}}\sum_{n\in\mathcal{N}}\xi_{knm} \leq 1, \,\,\,\,\,\,\,\,   \forall m \in \mathcal{M},\label{opt13}\\\vspace{-.2cm}
&\,\,\,\,\,\,\,\,\,\,\,\,\,\,\,\,\,\,\,\,\,\,\,\,\,\,\,\xi_{knm} \in \{0,1\},\vspace{-.1cm}\label{opt15}
\end{align}
where $\boldsymbol{Z}$ is a matrix that captures the social tie strength between every user pair and will be formally defined in Section \ref{sec:3}. Moreover, $\Omega_m(.)$ is the utility of user $m$ which is a function of achievable rates and social ties. If user $m$ is connected to an SCBS, $\Omega_m(.)$ simply represents the achievable rate of the link. If user $m$ is connected to an SUE, $\Omega_m(.)$ is the sum of the link's achievable rate, plus a term that determines how much user $m$ is socially connected to the cluster. The optimization problem in (\ref{opt11}) aims to maximize the sum utility of all users. The constraint in (\ref{opt12}) ensures that each RB is assigned to only one user, and the constraint in (\ref{opt13}) ensures that each user is assigned to one RB.

Due to the unplanned deployment of backhaul-constrained SCBSs and the limited possibilities for SCBS coordination \cite{14}, our goal is to develop a \emph{self-organizing, decentralized} resource allocation solution. This decentralized solution for the problem in (\ref{opt11})-(\ref{opt15}) will be addressed in depth in Section \ref{sec:utility}. Before doing so, we formally define the social tie strength in the next Section and explain how such context information could be extracted from the social networks.
\vspace{-0cm}
\section{Modeling Relationship Strength in Social Networks}\label{sec:3}\vspace{-0cm}
\subsection{Social Context in the Proposed SCN Model and SUE Choice}\label{SC}\vspace{-0cm}
Let $z_{ij}$ denote the social tie strength between two UEs $i$ and $j$. We define the social tie as a metric that determines how strong the relationship of two users is as inferred from the social network. This metric should then be incorporated into a proper utility function to be used in the context-aware resource allocation problem in (\ref{opt11}). In order to benefit from caching at the edge, popular contents must be cached at UEs that are chosen to serve as SUEs. Here, we assume that a user is chosen as SUE $m_s$ if its total social influence $I_{m_s}=\sum_{m \in \mathcal{M}, m\neq m_s} z_{m_sm}$ is larger than other users\footnote{Without loss of generality, other approaches for selecting SUE can also be accommodated.}. $I_{m_s}$ can be interpreted as a weighted degree of $m_s$ in a social network graph where the edge weights are determined by the $z$ term. We note that network operators need to provide some form of reward and incentive mechanisms to their users so that they act as SUEs. We can now define the notion of a \textit{social cluster}\vspace{-.2cm}
\begin{definition}
A \emph{social cluster (SC)} is defined as a set, $\mathcal{C}_{m_s}$, composed of an SUE $m_s$ and all the UEs which are connected to $m_s$ via D2D links.
\end{definition}\vspace{-0cm} We use the term \emph{social} here to emphasize that the social relationships of the users affect the formation of the cluster, as will be elaborated in Section \ref{sec:utility}. Due to the social effects, we can make the following observations: 1) a UE $m$ is encouraged by its friends to join the same SC in order to form socially stronger clusters, 2) SUEs with larger clusters (more assigned UEs) must get higher quality links from the $\mathcal{N}_2$ set, since the quality of the link from SCBS to SUE indirectly affects the quality of the direct D2D links, and 3) to improve offloading, SCBSs have an incentive to encourage UEs, with at least one friend as SUE, to use D2D links.

Such \emph{peer effects} motivate the need for an advanced model that can accurately define the strength of ties, $z_{ij}$. In \cite{63}, a graphical model is proposed to learn the strength of ties among a set of Facebook users which is suitable for our model. In particular, based on the homophily property, it is observed that the stronger the tie, the higher the similarity \cite{33}. Therefore, if two users have more attribute similarities in their profiles, e.g., the common groups that two UEs are members of, or the geographical locations, then their relationship is stronger. The relationship strength can be modeled as a \textit{hidden effect} of profile similarities and inferred via statistical learning concepts \cite{32,33}. In the following, we review a learning model based on \cite{63} that allows to understand how we can find the strength of ties, $z$ from a given social network dataset.
\vspace{-0cm}
\subsection{Learning Model}\label{learning model}
We note that the strength of the social relationship between two user impacts the nature and frequency of online interactions between a pair of users. Moreover, users naturally invest more of their resources (e.g., time) to build and maintain the relationships that they deem more important \cite{63}. Hence, as the relationship becomes stronger, it is more likely that a certain type of interaction will take place between the pair of users. In this way, we can model the relationship strength as the hidden cause of user interactions.

Formally, let $\boldsymbol{x}_i$ and $\boldsymbol{x}_j$ be, respectively, the attribute vectors of two UEs $i$ and $j$. An attribute vector is a vector that includes some of a user's social profile information, such as the user's age, political view, major, or the level of education. The relationship strength between $i$ and $j $ can be defined as a latent variable $z_{ij}$ which will be inferred for every pair of users. In addition, let $\boldsymbol{y}_{ij}$ be the vector of interactions whose elements $y_{ij,f}, f= 1,..., F$ are the $F$ different interactions considered between $i$ and $j$. Essentially, the interactions include the activities of UE $i$ that involves UE $j$, e.g., tagging one another or posting on each others' walls. This variable is assumed to be binary such that $y_{ij,f}=1$ if this interaction has occurred between UE $i$ and UE $j$ and $y_{ij,f}=0$, otherwise. Furthermore, the vector $\boldsymbol{e}_{ij,f}=[e_{ij,f}^1,...,e_{ij,f}^{\vartheta}]^T$ is defined for each interaction $f$ occurred between users $i$ and $j$.  This vector can show how much user $j$ is important for user $i$ to interact. For instance, if user $i$ has tagged user $j$ and assuming $\vartheta=1$, then $e_{ij,f}^{1}$ can be the overall number of users that user $i$ usually tags. Thus, smaller $e_{ij,f}^{1}$ implies stronger tie between users $i$ and $j$. This social model can be represented by a directed graphical model as shown in Fig. \ref{fig:model1}. In this model, $z_{ij}$ summarizes the profile similarities and interactions between users $i$ and $j$. However, it is not observable from users' profiles. Hence, we need to estimate $z_{ij}$ so as to maximize the overall observed data likelihood. To this end, the joint distribution of $z$ and $\boldsymbol{y}$ can be represented using general factorization:\vspace{-0cm}
\begin{align}\label{eq:a}
P(z_{ij},\boldsymbol{y}_{ij}|\boldsymbol{x}_{i},\boldsymbol{x}_{j})=P(z_{ij}|\boldsymbol{x}_{i},\boldsymbol{x}_{j})\prod_{f=1}^F P(y_{ij,f}|z_{ij}).
\end{align}
In order to infer the latent variables, we need to adopt the conditional probability of the relationship strength given the attribute similarities, i.e., $P(z_{ij}|\boldsymbol{x}_{i},\boldsymbol{x}_{j})$. In this regard, we consider the widely used Gaussian distribution \cite{63}\vspace{-0cm}
\begin{align}\label{eq:bb}
P(z_{ij}|\boldsymbol{x}_{i},\boldsymbol{x}_{j})=\mathcal{N}(\boldsymbol{w}^T \zeta(\boldsymbol{x}_{i},\boldsymbol{x}_{j}),\upsilon),
\end{align}
where the similarity vector $\zeta(\boldsymbol{x}_{i},\boldsymbol{x}_{j})$ is the set of similarity measures taken on the pair of users $(i,j)$ and $\boldsymbol{w}$ denotes the vector of parameters of the model in (\ref{eq:bb}).
\begin{figure}[!t]
  \begin{center}
   \vspace{0cm}
   \includegraphics[width=4cm]{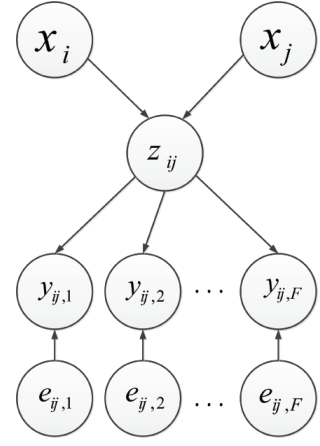}
    \vspace{-0cm}
    \caption{\label{fig:model1} The graphical representation of the social tie strength model \cite{63}.}
  \end{center}\vspace{0cm}
\end{figure}

Now, in order to completely describe the joint distribution in (\ref{eq:a}), the conditional probability of $y_{ij,f}$ given $z_{ij}$ and $e_{ij,f}$, can be modeled by using the logistic function:
\begin{align}\label{eq:c}
P(y_{ij,f}=1|\boldsymbol{u}_{ij,f})=\frac{1}{1+e^{-(\varrho_f^T \boldsymbol{u}_{ij,f})}},
\end{align}
where $\boldsymbol{u}_{ij,f}=[\boldsymbol{e}_{ij,f},z_{ij}]^T$. Moreover, $\boldsymbol{\varrho}_f=[\varrho_{f,1}, \varrho_{f,2}, ... ,\varrho_{f,\vartheta+1}]^T$ are the parameters of the model in (\ref{eq:c}) that must be estimated. From Fig.~\ref{fig:model1} and given latent variable $z_{ij}$, all elements of $\boldsymbol{y}_{ij}$ become independent of each other. Let $\mathcal{D}=\left\{(i_1,j_1), (i_2,j_2), ..., (i_D,j_D)\right\}$ be the set of user sample pairs observed from the network. The variables $\boldsymbol{x}_{i}$, $\boldsymbol{x}_{j}$, $\boldsymbol{y}_{ij}$ and $\boldsymbol{e}_{ij,f}$ could all be extracted from the social network. Hence, conditioned to the attribute similarities and model parameters, we can write (\ref{eq:a}) as:
\begin{align}\label{eq:d}
&P(z_{ij},\boldsymbol{y}_{ij}|\boldsymbol{x}_{i},\boldsymbol{x}_{j},\boldsymbol{w},\boldsymbol{\varrho})=\\\nonumber
&\prod_{(i,j)\in \mathcal{D}}\left(P(z_{ij}|\boldsymbol{x}_{i},\boldsymbol{x}_{j},\boldsymbol{w})\prod_{f=1}^F P(y_{ij,f}|z_{ij},\boldsymbol{\varrho}_f)\right).
\end{align}
Here, by substituting (\ref{eq:bb}) and (\ref{eq:c}) in (\ref{eq:d}), the joint distribution can be written as:
\begin{align}\label{eq:e}
&P(z_{ij},\boldsymbol{y}_{ij}|\boldsymbol{x}_{i},\boldsymbol{x}_{j},\boldsymbol{w},\boldsymbol{\varrho})\propto\\\nonumber
&\prod_{(i,j)\in \mathcal{D}}\left(e^{-\frac{1}{2\upsilon}\left(\boldsymbol{w}^T\zeta_{ij}-z_{ij}\right)^2}\prod_{f=1}^F\frac{e^{-(\boldsymbol{\varrho}_f^T \boldsymbol{u}_{ij,f})(1-y_{ij,f})}}{1+e^{-(\boldsymbol{\varrho}_f^T \boldsymbol{u}_{ij,f})}}\right).
\end{align}
\vspace{0cm}
\subsection{Inference}\label{sec:inference}
Given the defined learning model, we can now infer the social tie strength between each arbitrary pair of users $(i,j)$. One way to estimate $z_{ij}$ is to adopt the approach of \cite{63} in which $z_{ij}$ is treated as a parameter. Essentially, we can find the point estimates $\hat{\boldsymbol{w}},\hat{\boldsymbol{\varrho}}, \hat{z} $ that maximize the likelihood $P(y,\hat{z},\hat{\boldsymbol{w}},\hat{\boldsymbol{\varrho}}|x)$. To avoid overfitting the training dataset for the model in (\ref{eq:bb}) and (\ref{eq:c}), regularizers $\lambda_{w}$ and $\lambda_{\varrho}$ will be used respectively for the parameters $\boldsymbol{w}$ and $\boldsymbol{\varrho}$ with Gaussian priors. Overfitting can occur if the size of the $\boldsymbol{w}$ vector is too large for the observed data from the attribute vector $\boldsymbol{x}$. Using (\ref{eq:e}), we have:
\begin{align}\label{eq:g}
&P(z_{ij},\boldsymbol{y}_{ij},\boldsymbol{w},\boldsymbol{\varrho}|\boldsymbol{x}_{i},\boldsymbol{x}_{j})=\\\nonumber
&P(z_{ij},\boldsymbol{y}_{ij}|\boldsymbol{x}_{i},\boldsymbol{x}_{j},\boldsymbol{w},\boldsymbol{\varrho})P(\boldsymbol{w},\boldsymbol{\varrho}|\boldsymbol{x}_{i},\boldsymbol{x}_{j}),
\end{align}
and since the model parameters  $\boldsymbol{\omega}$ and  $\boldsymbol{\varrho}$ are independent of one another, as well as of the attributes of the users, we can write the joint conditional distribution in (\ref{eq:g}) as
\begin{align}\label{eq:gg}
P(z_{ij},\boldsymbol{y}_{ij},\boldsymbol{w},\boldsymbol{\varrho}|\boldsymbol{x}_{i},\boldsymbol{x}_{j})=
P(z_{ij},\boldsymbol{y}_{ij}|\boldsymbol{x}_{i},\boldsymbol{x}_{j},\boldsymbol{w},\boldsymbol{\varrho})P(\boldsymbol{w})P(\boldsymbol{\varrho}).
\end{align}

and hence,\vspace{-0cm}
\begin{align}\label{eq:h}
&\log P(z_{ij},\boldsymbol{y}_{ij},\boldsymbol{w},\boldsymbol{\varrho}|\boldsymbol{x}_{i},\boldsymbol{x}_{j})=\sum_{(i,j)\in \mathcal{D}}\!\!\!\!\left(\!-\frac{1}{2\upsilon}\left(\boldsymbol{w}^T\zeta_{ij}-z_{ij}\right)^2\right)\!\!+\notag\\
&\!\!\sum_{(i,j)\in \mathcal{D}}\!\!\!\left(\sum_{f=1}^F \left(-(1-y_{ij,f})(\boldsymbol{\varrho}_f^T \boldsymbol{u}_{ij,f})\!-\!\log\left(1+e^{-(\boldsymbol{\varrho}_f^T \boldsymbol{u}_{ij,f})}\right)\right)\!\!\right)\notag\\
&-\frac{\lambda_{w}}{2}\boldsymbol{w}^T\boldsymbol{w}-\sum_{f=1}^F \frac{\lambda_{\varrho}}{2}\boldsymbol{\varrho}_f^T\boldsymbol{\varrho}_f+ C,
\end{align}
where $C$ is a constant. (\ref{eq:h}) is a concave function and, thus, the latent variable and parameters can be derived by using gradients of this function. Using the iterative Newton-Raphson algorithm, we can maximize the function in (\ref{eq:h}) and find the optimum values for $z_{ij}$, $\boldsymbol{\varrho}_{f}$, and $\boldsymbol{w}$. We denote $\boldsymbol{Z}$ as a $M \times M$ matrix, where $z_{ij}$ is the element of $i$-th row and $j$-th column.

Given the proposed wireless model of Section II along with the inferred social tie metric $\boldsymbol{Z}$, we are now able to account for the social interconnections among users in conjunction with the physical layer constrains of the D2D-enabled SCN. We next develop a matching-based approach to allocate resources in a social context aware SCN.

\vspace{-0cm}
\section{Context-Aware Resource Allocation As a Matching Game}\vspace{0cm}
\label{sec:utility}
Having defined the social context, our next goal is to solve the resource allocation problem in (\ref{opt11}). The problem given by (\ref{opt11}), subject to (\ref{opt12})-(\ref{opt15}), is a 0-1 integer programming, which is a satisfiability problem as such one of Karp's 21 NP-complete problems \cite{92}. Hence, it is difficult to solve this problem via classical optimization approaches. Moreover, for a large-scale SCN network with D2D communication, it is desirable to solve the context-aware resource allocation problem in (\ref{opt11})-(\ref{opt15}) using a decentralized, self-organizing approach in which the SCBSs and devices can interact and make resource allocation decisions based on their local information without relying on a centralized entity for coordination. In addition, owing to the need for exploiting context information, it is of interest to define individual SN or UE utilities that capture the locally available context at each node.

To this end, matching theory is a promising approach to perform decentralized resource management in wireless networks\cite{30,27,28,82,89}. The main benefit of matching theory is its ability to define individual utilities per UE and SNs as well as the available algorithmic implementations that allow to provide a largely decentralized and self-organizing solution to the resource allocation problem in (\ref{opt11})-(\ref{opt15}) while accounting for all the nodes' information. In essence, a matching game is a situation in which two sets of players must be assigned to one another, depending on their preferences. In a matching game, each player must rank the players in the opposing set using a \emph{preference relation} which captures this player's evaluation of the players in the other set. In the studied context-aware model, the preference relation can be based on a variety of metrics related to the social and wireless realms. In this regard, we formulate the proposed resource allocation problem in SCNs as a two-sided one-to-one matching game in which each SN-controlled RB $n \in \mathcal{N}$ will be assigned to at most one user $m \in \mathcal{M}$ and vice versa. Therefore, we can formulate the problem as a one-to-one matching game given by the tuple $(\mathcal{M}, \mathcal{N}, \succ_{\mathcal{M}}, \succ_{\mathcal{N}})$. Here, $\succ_{\mathcal{M}}=\{\succ_m\}_{m \in \mathcal{M}}$ and $\succ_{\mathcal{N}}=\{\succ_n\}_{n \in \mathcal{N}}$ denote, respectively, the set of the preference relation of users and RBs. We formally define the notion of a matching:
\vspace{-0cm}
\begin{definition} A \emph{matching} $\mu$ is defined as a function from the set $\mathcal{M} \cup \mathcal{N}$ into the set of $\mathcal{M} \cup \mathcal{N}$ such that
$m=\mu(n)$ if and only if $\mu(m)=n$.
\end{definition}\vspace{-0cm}
Let $V_{m}(.)$ and $U_{n}(.)$ denote, respectively, the utility function of UE $m$ and RB $n$. Given these utilities, we can say that a user $m$ prefers RB $n_1$ to $n_2$, if $V_{m}(n_1)>V_{m}(n_2)$. This preference is denoted by $n_1\succ_m n_2$. Similarly, an RB $n$ prefers UE $m_1$ to $m_2$, if $U_{n}(m_1)>U_{n}(m_2)$ and this is denoted by $m_1\succ_n m_2$.

\vspace{-0cm}
\subsection{Users' Preferences}\label{sec:UE}\vspace{-0cm}
Depending on whether an RB $n$ is offered by an SUE or an SCBS, the UEs will have different utility functions. Moreover, the utility of one player may depend on the matching of other players, due to the peer effects described in Section \ref{SC}. In this regard, let $a_{mm_s;\mu}=\left\{a_{mm_s}|\mu\right\}$ be a variable used to determine the existence of a D2D link between UE $m$ and SUE $m_s$, conditioned on the current matching $\mu$. Similarly, we use $a_{mm_s;\mu\mu'}=\left\{a_{mm_s}|\mu,\mu'\right\}$, to indicate whether a certain UE $m$ is a member of $\mathcal{C}_{m_s}$ in both matchings $\mu$ and $\mu'$. The motivation for this definition will be explained in Section \ref{sec:conv}. Before computing the utilities, UEs first obtain the corresponding channel response of each RB, from all SNs and form a $K\times~\!\!N$ channel coefficient matrix $\boldsymbol{H}_{m}$. Using $\boldsymbol{H}_{m}$, each user $m \in \mathcal{M}$ shapes its achievable data rate matrix, i.e. $\boldsymbol{\Phi}_{\text{$m$}}$, whose elements are given by (\ref{eq:1})-(\ref{eq:2b}). From (\ref{eq:1}) and (\ref{eq:2a}), the utilities of a UE and an SUE, respectively, for RBs $n_1 \in \mathcal{N}_1$ and $n_2 \in \mathcal{N}_2$, are given by:
\begin{align}\label{eq:8}
V_{mn_i,l}(\gamma_{lnm}^{(i)})=w_{n_i}\log\left(1+\gamma_{lnm}^{(i)}\right),
\end{align}
where $m$ corresponds to a UE ($i=1$) or an SUE ($i=2$). From (\ref{eq:2b}), the utility of a user $m$  using an RB $n \in \mathcal{N}_3$ is given by:
\begin{align}\label{eq:9}
&V_{mn,m_s}(\boldsymbol{Z},\gamma_{m_snm}^{(3)},\mu)=\\\nonumber
&w_{n}\log\left(1+\gamma_{m_snm}^{(3)}\right)\!\!+
\alpha_m\!\!\left(z_{mm_s}+\sum_{j\in \mathcal{M}_u\backslash m} {a_{jm_s;\mu\mu'}z_{mj}}\right),
\end{align}
where $\alpha_m$ is a weighting parameter that allows to control the impact of the social ties on the overall decision of the user. The utilities given by (\ref{eq:8}) state that UEs and SUEs, respectively, rank RBs $n_1 \in \mathcal{N}_1$ and $n_2 \in \mathcal{N}_2$ based only on the achievable rates. However, (\ref{eq:9}) captures the peer effects on the UEs' preferences for RBs $n \in \mathcal{N}_3$. A UE can benefit more from the mutual interests among the SC members, if the SC members are more socially connected with one another and with the UE. The second term in (\ref{eq:9}) implies that the UE prefers to join an SC that has stronger ties with it.
\vspace{-.2cm}
\subsection{SN-based RBs' Preferences}\label{sec:RB}\vspace{0cm}
The proposed matching game can be fully represented once the preference of each RB is defined. The decision of an RB $n \in \mathcal{N}$ is mainly controlled by the SCBS or SUE that is using it. Hence, in the proposed game, SCBSs and SUEs make decisions on behalf of their RBs. In the considered model, there are three groups of RBs, each of which has a different preference over the UEs. Similar to Section \ref{sec:UE}, we define a novel scheme at the RB side of the game, which is based on the information extracted from the corresponding social network. For the transmission between SCBS $l \in \mathcal{L}$ and UE $m \in \mathcal{M}_u$, over RB $n\in \mathcal{N}_1$, the utility of RB $n\in \mathcal{N}_1$ when choosing UE $m$ is given by:
\begin{align}\label{eq:11}
U_{nm,l}(\boldsymbol{Z},\gamma_{lnm}^{(1)})= w_{n}\log\left(1+\gamma_{lnm}^{(1)}\right)-\beta_n(\sum_{j\in \mathcal{M}_s} {z_{mj}}).
\end{align}
Hence, $m_1\succ_{n} m_2$ if and only if $U_{nm_1,l}>U_{nm_2,l}$. $\beta_n$ is a weighting parameter that controls the impact of the social context. The second term in the right hand side of (\ref{eq:11}) implies that RBs $n \in \mathcal{N}_1$ give less utility to UEs who can be served by an SUE.

In addition, the utility achieved by RB $n\in \mathcal{N}_2$ when selecting SUE $m_s$, $U_{nm_s,l}$ is given by
\begin{align}\label{eq:12}\vspace{-0cm}
U_{nm_s,l}(\boldsymbol{Z},\gamma_{lnm_s}^{(2)},\mu)= w_{n}\log\left(1+\gamma_{lnm_s}^{(2)}\right)+\nu_n \cdot X_{m_s},
\end{align}
where $X_{m_s}=\sum_{m \in \mathcal{M}_u} a_{mm_s;\mu}z_{mm_s}$ is the total cumulative social tie strength of a particular SUE $m_s$ and $\nu_n$ is a weighting parameter that controls the importance of the SC that the SUE has formed in RB $n$'s utility. The utility function in (\ref{eq:12}) promotes SUEs with higher social ties since they are likely to form larger SCs. Finally, the utility of RB $n\in \mathcal{N}_3$ for user $m$ via D2D link from SUE $m_s$ is given by\vspace{-0cm}
\begin{align}\label{eq:13}
U_{nm,m_s}(\gamma_{m_snm}^{(3)})= w_{n}\log\left(1+\gamma_{m_snm}^{(3)}\right)+\kappa_n z_{mm_s}.\vspace{-.3cm}
\end{align}
The utility function in (\ref{eq:13}) implies that UEs must be accepted based on both quality of the D2D link and social context. $\kappa_n$ is a weighting parameter and implies that how important the role of social tie is for RB $n \in \mathcal{N}_3$ of an SUE $m_s \in \mathcal{M}_s$ in accepting a UE $m \in \mathcal{M}_u$.
\vspace{-.1cm}
\section{Proposed Context-Aware Resource Allocation Algorithm}\vspace{-0cm}
\label{sec: Algorithm}
Given the formulated context-aware matching game, our goal is to find a \emph{stable matching}, which is one of the key solution concepts in matching theory \cite{30}. Let $\mathcal{A}(\mathcal{M},\mathcal{N})$ denote the set of all possible matchings, and $\mu(m,n)$ denote a subset of $\mathcal{A}(\mathcal{M},\mathcal{N})$, where $m$ and $n$ are matched together. Then, we define a stable matching as follows:
\begin{definition}
A pair $(m,n) \notin \mu$, where $m \in \mathcal{M}, n \in \mathcal{N}$ is said to be a \emph{blocking pair} for the matching $\mu$, if there is another matching $\mu' \in \mu(m,n)$, where $\mu'\,\succ_m\,\mu \,\,\,\text{and}\,\,\, \mu'\,\succ_n\,\mu$.
A matching $\mu^*$ is \emph{stable} if and only if there is no blocking pair.\vspace{-.2cm}
\end{definition}
A stable matching solution for resource allocation problem in (\ref{opt11})-(\ref{opt15}) ensures that after allocating resources to the users, no RB-UE or RB-SUE pair in SCN would benefit from replacing their current association with a new link. That is, no user can benefit by changing its assigned frequency resource and vice versa. For the proposed context-aware resource allocation game, we can make the following observation:\vspace{0cm}
\begin{remark}
The proposed SCN matching game has \emph{peer} effects.
\end{remark}\vspace{0cm}
The RBs and users in a context-aware resource allocation game may change their preferences as the game evolves. That is, the preference of one player may depend on the preferences of the other players. For instance, the peer effects introduced in Section \ref{SC} make the preferences of RBs and users interdependent, due to the social interrelationships among users. This type of game is known as the \emph{matching game with peer effects}, in which players have preference ordering over the set of all possible matchings $\mathcal{A}(\mathcal{M},\mathcal{N})$\cite{Baron1}. This is in contrast with the traditional matching games in which players have fixed preference ordering\cite{89,27,94,28,30}.

For traditional matching games such as in \cite{89,27,94,28,30}, one can use the deferred acceptance algorithm, originally introduced in \cite{79}, to find a stable matching. However, such an algorithm may not be able to converge to a stable matching when the game has peer effects \cite{30}, such as in the proposed context-aware resource allocation model. Therefore, there is a need to develop new algorithms, that significantly differ from existing applications of matching theory in wireless such as \cite{89,27,94,28}, so as to find the solution of the studied matching game.
\vspace{-0cm}
\subsection{Proposed Socially-Aware Resource Allocation Algorithm}
To solve the formulated matching game, we propose a novel algorithm for resource allocation in D2D-enabled SCNs. Table~\ref{tab:algo} shows the various stages of this proposed socially-aware resource allocation (SARA) algorithm that allows to solve the SCNs' matching game.

\vspace{-0cm}
The proposed algorithm is composed of four main stages: Stage 1 includes the matching of SUEs with RBs $n \in \mathcal{N}_2$, Stage 2 focuses on the matching of UEs with RBs $n \in \mathcal{N}_1\, \cup\, \mathcal{N}_3$, Stage 3 focuses on updating the SC information, and Stage 4 during which the actual downlink transmission occurs. Initially, the SCBSs use the knowledge of social ties among users to choose the SUEs as discussed in \ref{SC}. Each SCBS sends a proposal to its neighboring users that are deemed influential enough to be an SUE. UEs accept or reject the proposals and the SCBSs broadcast the set of SNs, $\mathcal{K}$, and the sets of RBs, $\mathcal{N}_i, i=1,2,3$.
\begin{table}[!t]
  \centering
  \caption{
    \vspace*{-0em}Proposed Social Context-Aware Resource Allocation Algorithm}\vspace*{-0em}
    \begin{tabular}{p{3.2in}}
      \hline \vspace*{-0em}
      \textbf{Inputs:}\,\,$\mathcal{L},\mathcal{M},\mathcal{N},\boldsymbol{H}, \boldsymbol{Z}$\\
\hspace*{1em}\textit{Initialize:}   \vspace*{0em}
SCBSs send proposal to high influential UEs to act as SUE. SCBSs broadcast the set $\mathcal{K}$ and announce the set of available RBs in $\mathcal{N}_1$, $\mathcal{N}_2$, and $\mathcal{N}_3$.
Initialize the set of $\mathcal{C}_{m_s}$ for each SUE as an empty set.\vspace*{-0cm}

\hspace*{0em} \textit{Stage 1:}\begin{itemize}\vspace*{-0em}
\item[] \hspace*{0em}(a) SUEs determine their preference ordering for RBs $n \in \mathcal{N}_2$, using (\ref{eq:8}).
\item[] \hspace*{0em}(b) RBs $n \in \mathcal{N}_2$ calculate utility of each SUE applicant using (\ref{eq:12}) for the current state of the matching.
\item[] \hspace*{0em}(c) SUEs apply for RBs $n \in \mathcal{N}_2$ and get accepted or rejected via the deferred acceptance algorithm.
\end{itemize}\vspace*{-0cm}
\hspace*{0em}\textit{Stage 2:}\begin{itemize}\vspace*{-0em}
\item[]  \hspace*{0em}(a) UEs apply for RBs $n \in \mathcal{N}_1 \text{ and } n \in \mathcal{N}_3$, using (\ref{eq:8}) and (\ref{eq:9}), respectively.
\item[]  \hspace*{0em}(b) RBs $n \in \mathcal{N}_1$ and $n \in \mathcal{N}_3$ calculate utility of each UE applicant using (\ref{eq:11}) and (\ref{eq:13}), respectively.
\item[]  \hspace*{0em}(c) UEs get accepted or rejected by RBs $n \in \mathcal{N}_1 \cup\mathcal{N}_3$ through deferred acceptance algorithm.
\end{itemize}\vspace*{-0cm}
\hspace*{0em}\textit{Stage 3:}\begin{itemize}\vspace*{-0em}
\item[]   \hspace*{0em}(a) Update SC information, $\mathcal{C}_{m_s}$ for $\forall m_s \in \mathcal{M}_s$.
\item[]  \hspace*{0em}(b) SUEs broadcast SC information of the current matching, i.e., $a_{mm_s;\mu}$ coefficients to their nearby UEs.
\end{itemize}
\hspace*{0em}\textbf{while} $\mathcal{C}_{m_s}, \forall m_s \in \mathcal{M}_s$ remain unchanged for two consecutive matchings\\
\hspace*{0em}\textit{repeat Stage 1 to Stage 3}\vspace*{0em}\\
\hspace*{0em} \textit{Stage 4:}  \begin{itemize}\vspace*{-0em}
\item[]  \hspace*{0em}(a) For any cluster member, SUE determines if the current requested data exists in its directory.
\item[]  \hspace*{0em}(b) Actual downlink transmission of data occurs from each RB to its matched SUE or UE.
\end{itemize}
\hspace*{0em}\textbf{Output:}\,\,Stable matching $\mu^*$\vspace*{0em}\\
   \hline
    \end{tabular}\label{tab:algo}\vspace{-0.5cm}
\end{table}

After initialization, each SUE applies for $n \in \mathcal{N}_2$ based on (\ref{eq:8}) and each RB accepts the most preferred UE and rejects other proposals based on the utilities defined in (\ref{eq:12}). Stage 1 terminates once each SUE is accepted by an RB or rejected by all its preferred RBs. This matching remains unchanged until SUEs update the cluster $\mathcal{C}_{m_s}, \forall m_s \in \mathcal{M}_s$, sets. However, the new context information changes the preferences of the RBs for SUEs based on (\ref{eq:12}). That is due to the fact that $X_{m_s}=\sum_{m \in \mathcal{M}_u} a_{mm_s;\mu}z_{mm_s}$ determines how much the members of an SUE's cluster are socially connected. A larger $X_{m_s}$ implies that the members can benefit more from D2D due to common interests in their requested data. Therefore, RBs $n \in \mathcal{N}_2$ prefer to be matched to an SUE with larger $X_{m_s}$. Due to the change in their preference ordering, SUEs and RBs $n \in \mathcal{N}_2$ need to repeat this stage once the context information is updated.

Following the first stage, UEs apply for $n \in \mathcal{N}_1$ or $n \in \mathcal{N}_3$, based on the utilities defined in (\ref{eq:8}) and (\ref{eq:9}), respectively. The SCBSs and SUEs controlling RBs $n \in \mathcal{N}_1$ and $n \in \mathcal{N}_3$, accept the UE that gives the higher utility based, respectively, on (\ref{eq:11}) and (\ref{eq:13}), and reject the rest of the applicants. As long as $\mathcal{C}_{m_s}, \forall m_s \in \mathcal{M}_s$, do not change, UEs have strict preference over RBs $n \in \mathcal{N}_1 \cup \mathcal{N}_3$ and vice versa. Stage $2$ ends once each UE is accepted by one RB or rejected by all RBs of its preference list.

All clusters are subject to change due to the peer effects in the matching game. Thus, players need to update their preferences based on the new $\mathcal{C}_{m_s}, \forall m_s \in \mathcal{M}_s,$ information resulted from Stage $2$. In Stage 3, each SUE $m_s$ updates the SC information, i.e. $\mathcal{C}_{m_s}, m_s \in \mathcal{M}_s$, based on the results of the current matching and broadcasts $\mathcal{C}_{m_s}$ set to its nearby UEs and the corresponding SCBS. According to this information, players sort their preferences conditioned on the current matching. The algorithm terminates, once the $\mathcal{C}_{m_s}, m_s \in \mathcal{M}_s$ sets do not change for two consecutive matchings.
In the final stage, once the matching is complete, the downlink transmission of the UEs occurs, using the allocated resource blocks. This stage is essentially the actual communication stage in the D2D-enabled SCN.

\vspace*{-.1cm}
\subsection{Convergence and Stability of the Proposed Algorithm}\label{sec:conv}
In this subsection, we prove the stability of the algorithm proposed in Table \ref{tab:algo}. Prior to doing so, we make the following definition:\vspace*{-0cm}
\begin{definition}
Given the social interrelationship between UEs, an SC $\mathcal{C}_{m_s}$ is said to be \emph{S-stable}, if both of the following conditions are satisfied:\\
1) No UE $m$ outside the cluster $\mathcal{C}_{m_s}$ can join it. That is, for any $m \notin \mathcal{C}_{m_s}$ and $n \in \mathcal{N}_3$ belonging to $m_s$, there is no pair $(m,n) \notin \mu$ where $m\, \succ_n \,\mu(n) \,\, \text{and}\,\, n \,\succ_m \,\mu(m)$.\\
2) No UE $m$ inside the $\mathcal{C}_{m_s}$ can leave the cluster. That is, for any $m \in \mathcal{C}_{m_s}$ and $n \in \mathcal{N}_1 \cup \mathcal{N}_3$ that does not belong to $m_s$, there is no pair $(m,n) \notin \mu$ where $m\, \succ_n \,\mu(n) \,\, \text{and}\,\, n \,\succ_m \,\mu(m)$.\\
A matching is S-stable, if and only if all the clusters are S-stable.
\end{definition}

This notion of stability guarantees that the peer effects cannot make a UE outside the SCs join a cluster. In addition, the UEs inside SCs will not leave or change their clusters. However, it is not sufficient to ensure the required two-sided stability of the matching. Next, we discuss a property of the proposed game and show why the S-stability of SCs is non-trivial.\vspace*{-0.1cm}
\begin{proposition}\label{pro1}
Given the information on the social clusters, once a UE $m$ is accepted by an RB of a particular SC, $\mu(m)$ will not reject $m$ in favor of any new applicant. However, this does not imply UE $m$ has no incentive to leave the cluster.
\end{proposition}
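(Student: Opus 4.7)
The plan is to exploit the asymmetry between the RB utility in (\ref{eq:13}) and the UE utility in (\ref{eq:9}): the former is independent of the matching $\mu$, while the latter is not. This single structural observation yields both halves of the proposition.

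For the first claim, I would fix the RB $n \in \mathcal{N}_3$ controlled by SUE $m_s$ with $\mu(m) = n$, and compute its utility for an arbitrary competing UE $m'$. From (\ref{eq:13}) this is $U_{nm',m_s} = w_n\log(1+\gamma_{m_snm'}^{(3)}) + \kappa_n z_{m'm_s}$, which depends only on the channel state and the fixed pairwise social tie $z_{m'm_s}$ --- never on the cluster indicators $a_{jm_s;\mu}$ or on any other matching-dependent quantity. Hence the preference relation $\succ_n$ is invariant throughout the algorithm. The hypothesis that $m$ is accepted gives $U_{nm,m_s} \geq U_{nm',m_s}$ for every $m'$ that proposed, or could propose, to $n$ under the current social-cluster information. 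Because $\succ_n$ is fixed, this same inequality continues to hold for any future proposer at $n$, so no newcomer can strictly outrank $m$ and the deferred acceptance step at $n$ must retain $m$.

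For the second claim, I would turn to the UE side and note that $V_{mn,m_s}$ in (\ref{eq:9}) contains the peer-effect term $\alpha_m \sum_{j\in \mathcal{M}_u \setminus m} a_{jm_s;\mu\mu'} z_{mj}$, which explicitly references the composition of $\mathcal{C}_{m_s}$. A concrete witness would be a UE $j^\star$ with large $z_{mj^\star}$ that drops out of the cluster after Stage 3; the resulting shrinkage of the aggregate social term can depress $V_{mn,m_s}$ below $V_{mn',l}$ for some alternative RB $n' \in \mathcal{N}_1$, so that $n' \succ_m n$. At that point $m$ strictly prefers to abandon $\mathcal{C}_{m_s}$ even though the RB $\mu(m)$ would still happily keep $m$, which is exactly the asymmetry the proposition asserts.

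The main obstacle is the first claim, because a generic deferred acceptance execution does in principle permit displacement of a tentative match. The argument must lean on the fact that the RB preference is literally unchanged as $\mu$ evolves: any would-be displacer $m'$ with $m' \succ_n m$ was already subject to the same ranking when $m$ was accepted and would have been preferred then, contradicting the hypothesis. Making this watertight requires carefully distinguishing two time scales --- the within-stage DA iterations of Stage 2 and the outer loop of Stages 1--3 --- and verifying that the invariance of $\succ_n$ bridges both, since otherwise an updated cluster in a later outer iteration could conceivably reshuffle who proposes where.
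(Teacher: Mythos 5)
Your overall strategy is the one the paper uses: the first claim rests on the observation that the RB-side utility in (\ref{eq:13}) contains no matching-dependent term, so $\succ_n$ for $n\in\mathcal{N}_3$ is fixed throughout the run, while the second claim rests on the peer-effect term in (\ref{eq:9}); your ``friend $j^\star$ drops out'' scenario is essentially a paraphrase of the paper's own two-cluster counterexample. So the decomposition and the key structural observation coincide with the paper's.

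The problem is the first claim, and you have located the hole yourself without closing it. Acceptance of $m$ by $n$ only certifies $U_{nm,m_s}\geq U_{nm',m_s}$ for the UEs $m'$ that \emph{actually proposed} to $n$; it says nothing about a UE $m'$ that was held by some other RB $n'$ in earlier rounds, never reached $n$ in its own preference list, and only proposes to $n$ in a later outer iteration after the growth of some cluster has pushed $V_{m'n,m_s}$ above its utility for $n'$. The invariance of $\succ_n$ does not by itself exclude $m'\succ_n m$ for such an $m'$, because $n$ never compared the two. Your sentence ``was already subject to the same ranking when $m$ was accepted and would have been preferred then, contradicting the hypothesis'' is therefore a non sequitur: $n$ preferring $m'$ is not contradicted by $m$'s acceptance unless $m'$ applied. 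The paper devotes the bulk of its argument to exactly this case: it partitions the RBs into the sets $\mathcal{U}$ and $\mathcal{W}$ of (\ref{eq:19}) relative to the late applicant, observes that a UE proposing to $n$ in a subsequent matching must currently be matched to an RB it ranks below $n$, and then invokes the argmax property (\ref{eq:20}) of the deferred-acceptance step to conclude $U_{nm',m_s}<U_{nm,m_s}$, on the grounds that otherwise $m'$ would already have been matched to $n$. Some argument of this type --- tracking where a late proposer must currently be matched and why that forces it below $m$ in $n$'s fixed ranking --- is what your write-up is missing; the bare matching-independence of (\ref{eq:13}) is necessary for the claim but not sufficient to prove it.
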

\begin{proof}
First, consider the game at its initial state where no UE has been assigned to any SUE (no SC is formed). That is, $\sum_{j\in \mathcal{M}_u\backslash m} {a_{jm_s;\mu\mu'}z_{mj}}=0$ and $V_{mn,m_s}=w_{n}\log\left(1+\gamma_{m_snm}^{(3)}\right)+\alpha_m z_{mm_s}$ for $\forall m \in \mathcal{M}_u$. Thus, from (\ref{eq:9}), a UE $m$ that applies for RB $n \in \mathcal{N}_3$ during the first matching will do so only due to the higher achievable rates plus social tie with the corresponding SUE. Without loss of generality, the preference ordering of UE $m$ is identified with two sets $\mathcal{U}$ and $\mathcal{W}$, where\vspace{0cm}
\begin{align}\label{eq:19}
\mathcal{U}&=\{\forall n' \in \mathcal{N}_1 |V_{mn',l}>V_{mn,m_s}\}\notag\\
&\cup \{\forall n' \in \mathcal{N}_3|V_{mn',m_s}>V_{mn,m_s}\};\notag\\
\mathcal{W}&=\{\forall n' \in \mathcal{N}_1 |V_{mn',l}<V_{mn,m_s}\}\notag\\
&\cup \{\forall n' \in \mathcal{N}_3|V_{mn',m_s}<V_{mn,m_s}\}.\vspace*{-0.5cm}
\end{align}
In (\ref{eq:19}), $\mathcal{U}$ and $\mathcal{W}$, represent the set of all RBs that are, respectively, more preferred and less preferred to UE $m$ than RB $n$. Thus, the preference ordering of UE $m$ can be denoted by $\mathcal{U}\succ_m n \succ_m \mathcal{W}$. In Stage 2, the process of acceptance or rejection of applicants is done in a manner analogous to the conventional deferred acceptance algorithm \cite{27}. Thus, we can ensure that, for a given SC setting, each $n \in \mathcal{N}_3$ has accepted, out of the applicants, the UE that
\begin{align}\label{eq:20}
\mu(n)&=\argmax_{m} U_{nm,m_s}(\gamma_{m_snm}^{(3)})\notag\\
 &=\argmax_{m} w_{n}\log\left(1+\gamma_{m_snm}^{(3)}\right)+\kappa_n \cdot z_{mm_s}.
\end{align}
Therefore, if another UE $m'$ applies for the RB $n$ during the next matching $\mu'$, then necessarily $n'=\mu(m')\in \mathcal{W}$, if UE $m'$ is not unmatched. This is due to the fact that, if $n'=\mu(m')\in \mathcal{U}$, then it means that UE $m'$ already satisfies (\ref{eq:20}) for RB $n'$, and hence, it will be accepted by $n'$, before applying to $n$.

Now, since $\mu(m')\in \mathcal{W}$, we can conclude that $U_{nm',m_s}<U_{nm,m_s}$; since otherwise, $\mu(m')=n$ which contradicts (\ref{eq:20}). In other words, the UEs who are accepted in the first iteration will not be rejected by their match as the game proceeds. We can hold the same argument for the next iterations. Nevertheless, this does not imply that UE $m$ necessarily stays in its cluster forever. To show this, we give an example.\\
\textbf{Example:}  With this example, we show that UEs may alter the S-stability of the clusters. Given two UEs $m$ and $m'$, two RBs $n \in \mathcal{N}_1$ and $n' \in \mathcal{N}_3$, and two SUEs $m_{s1}$ and $m_{s2}$, assume that $V_{mn;m_{s1}}>V_{mn';m_{s2}}$ before any SC is formed and let the current matching be $\mathcal{C}_{m_{s1}}=\{m_{s1},m\}$ with $\mu(m)=n$ and $\mathcal{C}_{m_{s2}}=\{m_{s2},m'\}$ with $\mu(m')=n'$. Once the SC information is updated for UEs, the utility of UE $m$ for RBs $n$ and $n'$ might change to $V_{mn;m_{s1}}<V_{mn';m_{s2}}$, due to its social tie with $m'$. Therefore, $m$ will leave its current cluster $\mathcal{C}_{m_{s1}}$ and will join $\mathcal{C}_{m_{s2}}$. Due to this move, both clusters are not considered S-stable. \end{proof}\vspace*{-0cm}
Based on the Proposition \ref{pro1}, we can show the S-stability of the proposed context-aware resource allocation game as follows:\vspace*{-0cm}
\begin{theorem}\label{pro2}
Each SC becomes S-stable after a finite number of iterations and, thus, the proposed algorithm in Table \ref{tab:algo} is guaranteed to converge.
\end{theorem}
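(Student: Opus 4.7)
The plan is to prove convergence by combining the finiteness of the matching state space with a monotonicity argument that rules out cycles in the sequence of matchings produced by the algorithm. First, I would note that since $|\mathcal{M}|$ and $|\mathcal{N}|$ are finite, the set $\mathcal{A}(\mathcal{M},\mathcal{N})$ of possible matchings is finite; hence to prove termination it suffices to show that the sequence of matchings $\mu,\mu',\mu'',\ldots$ generated by iterating Stages 1--3 cannot cycle. Once cycling is excluded, the sequence must eventually repeat a matching for two consecutive iterations, which is exactly the termination condition in Table \ref{tab:algo}, and the resulting matching is S-stable by definition.

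Next I would build on Proposition \ref{pro1} to characterize the only admissible ``move'' between consecutive iterations. By Proposition \ref{pro1}, RBs never reject a UE they have already matched in favor of a new applicant, so any change in the matching between $\mu$ and $\mu'$ must be initiated by UEs voluntarily leaving their current cluster. From (\ref{eq:9}), any such move by a UE $m$ from cluster $\mathcal{C}_{m_{s_1}}$ to $\mathcal{C}_{m_{s_2}}$ implies a strict increase in $V_{mn,m_s}$ for that UE, evaluated against the current SC information. This is where the two-matching indicator $a_{jm_s;\mu\mu'}$ becomes essential: when UE $m$ reassesses whether to move, it counts social benefit only from UEs $j$ that belonged to $\mathcal{C}_{m_s}$ in \emph{both} the previous matching $\mu$ and the current matching $\mu'$. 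In other words, fleeting members contribute nothing to the social utility of a cluster, so a cluster's attractiveness to newcomers is anchored to its persistent core rather than to ephemeral fluctuations.

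The core of the proof would then be a cycle-exclusion argument. Suppose, for contradiction, that after some iteration the sequence of matchings enters a cycle $\mu_1\to\mu_2\to\cdots\to\mu_T\to\mu_1$ in which at least one UE changes cluster at every step. Along such a cycle, I would track the persistent-membership indicator $a_{jm_s;\mu_t\mu_{t+1}}$ for every cluster and show that any UE $m$ involved in a move at step $t$ must strictly improve its utility $V_{mn,m_s}$ as evaluated at that step. Summing these strict inequalities around the cycle and using the fact that every $(\mu_t,\mu_{t+1})$ pair appears exactly once, the persistence-based social contributions telescope or cancel, while the channel-rate contributions $w_n\log(1+\gamma)$ are fixed functions of the endpoint matching. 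This should collapse the chain of strict inequalities into an inconsistency of the form ``total utility strictly greater than itself,'' yielding the desired contradiction and ruling out cycles.

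The main obstacle is precisely this cycle-cancellation step: constructing a clean potential (or telescoping sum) on the space of matching pairs despite the externalities that one UE's move imposes on the utilities of the other UEs in both source and destination clusters. A convenient fallback, if a global potential proves elusive, is to use the S-stability definition directly: argue that only finitely many UEs can move before every remaining cluster satisfies both conditions of Definition 4, because each successful move strictly removes a blocking pair without creating a new one (again leveraging the $a_{jm_s;\mu\mu'}$ anchoring and Proposition \ref{pro1}). Once no blocking pair remains, clusters are S-stable, and the terminating matching $\mu^{*}$ is reached in finitely many iterations.
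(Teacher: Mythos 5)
Your overall skeleton (finite state space, hence it suffices to exclude cycles; Proposition \ref{pro1} to show RBs never evict a matched UE, so all changes are UE-initiated departures) matches the paper's starting point, and you correctly identify the $a_{jm_s;\mu\mu'}$ two-consecutive-matching indicator as the mechanism that must do the work. However, the central step of your argument is asserted rather than established, and this is a genuine gap. Summing each mover's strict improvement around a hypothetical cycle does not "collapse" into a contradiction here, because each UE's utility in (\ref{eq:9}) depends on the cluster composition at the moment of the move: the externality means the quantity being improved is evaluated against a different state at each step, so the inequalities do not chain and the social terms do not telescope. This is precisely the known failure mode of potential arguments in matching with peer effects, and you flag it as "the main obstacle" without resolving it. Your fallback is also unsound: it is false that each successful move "strictly removes a blocking pair without creating a new one" --- when $m$ joins $\mathcal{C}_{m_{s_2}}$, every UE with a positive social tie to $m$ sees its utility for $m_{s_2}$'s RBs increase via (\ref{eq:9}), which can create new blocking pairs; that is exactly why the game has peer effects.

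The paper closes this gap not with a potential function but with a direct case analysis of how a UE $m$ with $\mu(m)\in\mathcal{N}_3$ can leave its cluster (toward an unmatched RB in $\mathcal{N}_1$, or toward an unmatched RB in $\mathcal{N}_3$ of another cluster), and in each case shows that any resulting oscillation must be driven by some peer $m''$ that itself oscillates; since a peer that fails to remain in the cluster for two consecutive matchings has its contribution zeroed ($a_{mm'';\mu\mu'}=0$), the incentive sustaining the oscillation is removed after finitely many iterations, so the oscillation dies out and the cluster composition stabilizes. If you want to salvage your cycle-exclusion framing, you would need to replace the telescoping sum with this induction on the "oscillation dependency chain": an oscillating UE's motive traces back to another oscillating UE, the chain is finite, and the anchoring rule kills the base of the chain.
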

\begin{proof}
From Proposition \ref{pro1}, each $n \in \mathcal{N}_3$ will not reject its current match $\mu(n)$, in favor of other applicants. Thus, a new UE $m$ can join an SC, only if it applies for an unmatched $n \in \mathcal{N}_3$. After each matching is done, more UEs will join SCs, due to the non-negative social effect they impose on one another. Eventually, there is a stage where no more UEs prefer to join clusters.

There can be only two possibilities if a UE $m$ with $\mu(m)\in \mathcal{N}_3$ can leave its cluster, and thus, alter the S-stability condition: \emph{Case 1:} there is an unmatched $n \in \mathcal{N}_1$ such that $n\succ_m \mu(m)$, and \emph{Case 2:} there is an unmatched $n \in \mathcal{N}_3$ corresponding to another cluster where, again, we have $n\succ_m \mu(m)$. Next, we show that even when either Case 1 or Case 2 occurs, the current SC will converge to an S-stable SC.

For the first case, we can check that UE $m$ has surely been rejected in previous matchings by an RB $n \in \mathcal{N}_1$ such that $n\succ_m \mu(m)$. Here, since $\mu(n)$ is matched to another player and RB $n$ is unmatched now, $m$ will be accepted by RB $n$. If $\mu(n)$ never applies again for RB $n$, then we can ensure that UE $m$ will not be rejected by $n$, since it satisfies $\mu(n)=\argmax_{m} U_{nm,l}(\gamma_{lnm}^{(1)})$. Thus, UE $m$ will not get back to its SC again. If $\mu(n)$ applies again to RB $n$, then RB $n$ will reject $m$. However, $\mu(n)$ cannot cycle between RB $n$ and another RB $n'$ for unlimited number of iterations. This is due to the fact that, if $\mu(n)$ oscillates between $n$ and another RB $n' \in \mathcal{N}_3$, the reason is due to the peer effect by a current SC member $m''$ who also oscillates between $n'$ and another RB. However, $\mu(n)$ will set $a_{\mu(n)m'';\mu\mu'}=0$, if $m''$ does not stay in SC for two consecutive matchings. Therefore, $\mu(n)$ has to finally decide between $n$ and $n'$ after finite iterations. Moreover, if an RB $n' \in \mathcal{N}_1$ is the reason due to which $\mu(n)$ cycles, then similarly we can show that $\mu(n')$ will have to stop the oscillation after a finite number of iterations. Consequently, $\mu(n)$ will stop oscillation.

Case 2 implies that some of the friends of UE $m$ has encouraged $m$ to join their cluster. Here, if UE $m$ joins the new SC, it will never prefer the previous SC to new one, unless some of its friends, say $m'$, leave the cluster. Then, UE $m$ ignores the peer effect of $m'$ by setting $a_{mm';\mu\mu'}=0$ and reorganizes its preference ordering. After a finite number of iterations, the friends' list who stay in the cluster will not change and, thus, UE $m$ stays in the new cluster or comes back to the previous cluster and will never cycle between those two. In addition, if the incentive for UE $m$ to join RB $n'$ of the new cluster does not stem from the friendship relationship, then, this implies that $n'\succ_{m}\mu(m)$ and UE $m$ has been rejected by $n'$ in previous matchings. Thus, similar to the previous case, we can observe that after finite iterations, the $\mu(n')$ stops oscillating, and, hence, UE $m$ will decide whether to stay with $\mu(m)$ or leave it.\vspace{-0cm}
\end{proof}
Given the results in Proposition \ref{pro1} and Theorem \ref{pro2}, we can now state the main result with regard to the two-sided stability of the matching.\vspace*{-0cm}
\begin{theorem}\label{theorem1}
The proposed algorithm in Table \ref{tab:algo} is guaranteed to reach a two-sided stable matching between users and RBs.\vspace{-.2cm}
\end{theorem}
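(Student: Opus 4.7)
The plan is to leverage the convergence result in Theorem \ref{pro2} and then rule out blocking pairs at the limiting matching $\mu^*$ by a case split over the three disjoint RB subsets $\mathcal{N}_1$, $\mathcal{N}_2$, $\mathcal{N}_3$. Once the clusters $\mathcal{C}_{m_s}$ freeze, the social-context terms $a_{mm_s;\mu\mu'}$ and $X_{m_s}$ stop changing, so every utility in \eqref{eq:8}--\eqref{eq:13} is fixed. Stages 1 and 2 are then executed one final time with these frozen preferences, which reduces each of them to a standard deferred acceptance round over a subset of the players.

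First I would handle RBs in $\mathcal{N}_2$: Stage 1 pairs them with SUEs via DA using utilities \eqref{eq:8} and \eqref{eq:12}. By the classical Gale--Shapley result \cite{79}, the outcome is pairwise stable between $\mathcal{M}_s$ and $\mathcal{N}_2$, so no blocking pair of the form $(m_s,n)$ with $n\in\mathcal{N}_2$ can exist at $\mu^*$. Next I would treat RBs in $\mathcal{N}_1$: Stage 2 matches them to non-SUE UEs using \eqref{eq:8} and \eqref{eq:11}; again the DA output is pairwise stable, so no blocking pair with $n\in\mathcal{N}_1$ exists among UEs matched cellularly. Proposition \ref{pro1} guarantees that once a UE is accepted by an RB it is never dropped for a later applicant, so these DA outcomes are preserved throughout subsequent iterations and survive to $\mu^*$.

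The genuinely new work concerns blocking pairs that cross the cellular/D2D boundary, where the peer effects live. Here I would appeal directly to the S-stability property of $\mu^*$ established in Theorem \ref{pro2}. Condition~1 of S-stability forbids any outside UE $m\notin\mathcal{C}_{m_s}$ from forming a blocking pair with an RB $n\in\mathcal{N}_3$ controlled by $m_s$, which covers every candidate blocking pair $(m,n)$ with $n\in\mathcal{N}_3$ in which $m$ is currently matched to a cellular RB or to a different cluster. Condition~2 forbids any cluster member $m\in\mathcal{C}_{m_s}$ from blocking together with an RB $n\in\mathcal{N}_1\cup\mathcal{N}_3$ that does not belong to $m_s$, which covers the deviation of a currently-D2D user to a cellular RB or to another cluster. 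Together with the two cases of the previous paragraph, these two conditions exhaust the admissible blocking-pair configurations.

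The main obstacle I expect is the consistency between the peer-effect-driven preference updates and the DA rounds that deliver stability at each stage: the claim ``DA gives a stable sub-matching'' is valid only if the preference lists used in that pass reflect the preferences all players actually face at $\mu^*$. This is secured by the termination rule of Table \ref{tab:algo}, which requires $\mathcal{C}_{m_s}$ to be unchanged for two consecutive matchings before the algorithm halts, together with Theorem \ref{pro2}, which asserts that such a fixed point is reached in finitely many iterations. Once this consistency is in hand, the case split above closes every admissible blocking pair and delivers two-sided stability of $\mu^*$.
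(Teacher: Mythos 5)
Your proposal follows essentially the same route as the paper's proof: invoke the S-stability of all clusters at termination (Theorem \ref{pro2}) to kill every cross-cluster and cross-tier blocking pair, and invoke the deferred-acceptance property with the now-frozen preferences to kill blocking pairs within the $\mathcal{N}_2$--SUE and $\mathcal{N}_1$--UE matchings, with the termination rule guaranteeing the preferences used in the final DA passes are the ones that hold at $\mu^*$. The only cosmetic difference is that your exhaustiveness claim leaves implicit the within-cluster case (a UE in $\mathcal{C}_{m_s}$ blocking with a different RB $n\in\mathcal{N}_3$ of the same $m_s$), which the paper dispatches explicitly by the same DA argument you already state for Stage 2, so nothing substantive is missing.
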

\begin{proof}
In order to prove the two-sided stability, we need to show that there is no blocking pair $(m,n)$ or $(m_s,n)$ that meets either of the following:
\begin{align}
&(m,n) \notin \mu \,\,\,\,|\,\,\mu(m) \in \mathcal{N}_3 \,\,\,,\,\,\,n \in \mathcal{N}_1 \cup \mathcal{N}_3;\label{eq:22}\\
&(m_s,n) \notin \mu\,\,|\,\, m_s \in \mathcal{M}_s\,,\, n \in \mathcal{N}_2;\label{eq:23}\\
&(m,n) \notin \mu \,\,\,\,|\,\,\mu(m) \in \mathcal{N}_1 \,\,\,,\,\,\,n \in \mathcal{N}_1 \cup \mathcal{N}_3.\label{eq:24}
\end{align}
All SCs are S-stable once the algorithm terminates since, otherwise, the SC sets will be different from those in previous matching which contradicts the termination condition.

From Theorem \ref{pro2}, we can see that no UE $m$ having $\mu(m) \in \mathcal{N}_3$ would make a blocking pair with any RB $n_1 \in \mathcal{N}_1$ or $n_3 \in \mathcal{N}_3$ from another cluster, due to the S-stability of all clusters. In addition, the matching of RBs $n \in \mathcal{N}_3$ belonging to SUE $m_s$ and UEs in $\mathcal{C}_{m_s}$ is done through the deferred acceptance algorithm. Therefore, these players will not form a blocking pair. Consequently, there is no blocking pair that satisfies (\ref{eq:22}).

In addition, the preferences of RBs $n \in \mathcal{N}_2$ become strictly fixed, once the S-stability is satisfied at all clusters. Then, followed by the deferred acceptance algorithm, we ensure that the given matching between SUEs and RBs $n \in \mathcal{N}_2$ is stable and there is no blocking pair that satisfies (\ref{eq:23}).

Finally, no RB $n \in \mathcal{N}_3$ makes a blocking pair with any $m$ where $\mu(m) \in \mathcal{N}_1$, due to the S-stability of all clusters. Moreover, the matching of RBs $n \in \mathcal{N}_1$ and UEs $m$ where $\mu(m) \in \mathcal{N}_1$ is followed by the deferred acceptance algorithm. Therefore, these players will not form a blocking pair and there is no blocking pair that satisfies (\ref{eq:24}). Hence, the proposed SARA algorithm is guaranteed to reach a two-sided stable matching for all D2D and cellular links.
\end{proof}\vspace*{-0cm}
\subsection{Complexity Analysis of the Proposed Algorithm}
In order to analyze the computational complexity of the proposed algorithm, we can start by investigating the simple case in which the matching game has no peer effect, i.e., users and RBs have strict preference ordering. Here, we consider two cases: 1) when the number of UEs is less than the total number of RBs, i.e., $M_u\leq N_T$, where $N_T=N_1\times L+N_3\times M_s$ and 2) when the number of UEs is greater than the total number of RBs, i.e., $M_u> N_T$. Our goal is to analyze the worst case scenario, i.e., the maximum number of iterations and the maximum number of matching proposals sent from UEs to either SCBSs or SUEs (which relate to the messaging overhead). In each iteration, UEs send a proposal to their most preferred RB, and RBs receive the proposals, accept the most preferred one and reject the other UEs. Therefore, it is clear that the number of unmatched UEs at each iteration is equal or less than the number of unmatched UEs at previous iterations.

For the first case, once the algorithm converges, all the UEs are matched, since RBs prefer any UE to being unallocated. We can easily observe that the worst case happens, if all UEs have the same preference ordering. Hence, at the end of each iteration $t$, there are $M_u-t$ unmatched UEs. Therefore, the maximum number of iterations, $t_{max}$, is obtained when all the users are matched, i.e., $M_u-t_{max}=0$. Hence, the complexity is of the order $\mathcal{O}(M_u)$. Furthermore, at each iteration $t$, $M_u-t+1$ proposals are sent. Hence, the messaging overhead, $S_{max}$, is equal to:
\begin{align}
S_{max}=\sum_{t=1}^{t_{max}}\left(M_u-t+1\right)=\frac{M_u(M_u+1)}{2}.
\end{align}

Similarly for the second case, we know that once the algorithm converges, there are exactly $M_u-N_T$ unmatched users. Again, the worst case happens, if all UEs have the same preference ordering. Hence, at each iteration, only one UE gets accepted. Therefore, the maximum number of iterations, $t_{max}$, is obtained when there are $M_u-N_T$ unmatched users, i.e., the complexity is of the order $\mathcal{O}(N_T)$. The messaging overhead is equal to:
\begin{align}
&S_{max}=\sum_{t=1}^{t_{max}}\left(M_u-t+1\right)=\\\nonumber
&\sum_{t=1}^{N_T}\left(M_u-t+1\right)=(M_u+1)N_T-\frac{N_T(N_T+1)}{2}.
\end{align}
As we see from the above equations, for $M_u<N_T$, the complexity of the matching algorithm increases linearly with the number of users. In addition, the messaging overhead exhibits quadratic increase with respect to the number of users. Moreover, for $M_u>N_T$, the upperbound for complexity is independent of the number of users.

The complexity of the proposed context-aware algorithm will further depend on the social matrix $\boldsymbol{Z}$. However, for a given SCs, the complexity of our algorithm follows the above analysis. From Theorem 1, we know that SCs change only for a finite number of iterations. Hence, we can anticipate that the overall complexity of the context-aware approach be linearly proportional to the complexity of the context-unaware approach.
\section{Simulation Results and Analysis}
\subsection{Social Context Dataset and Simulation Parameters}
For the evaluation of our results, we first use the learning model introduced in Section \ref{sec:3}. We have computed the social tie matrix $\boldsymbol{Z}$, for a set of 80 users from the Facebook network. We have used the real dataset released by Stanford University \cite{78}, which is generated by surveying a number of volunteer Facebook users, known as ego nodes. For the selected ego node, the dataset contains 224 anonymized attributes for each user, including education, gender, location, language, and work, among others. In addition, the dataset specifies 32 anonymized circles and determines which subset of users are in which circle. Each circle is a group, composed of a subset of users, which can be thought as a high school institution or a company. For our simulations, we assume that if two users $i$ and $j$ are within at least one common circle, then they interact with each other on Facebook. Finally, in order to determine the tendency of user $i$ to interact with another user $j$, we consider the degree of user $i$ in the ego network, i.e., the number of friends of user $i$. The motivation behind this assumption can be explained as follows: if user $i$ picks user $j$ to interact with from a larger number of friends, this implies that user $j$ is more important to user $i$ than other users.

In order to select the SUEs, we choose the four users with the highest weighted degree from the ego network. Thus, in our simulations, we have $M_s=4$ and the weights are determined by the social tie strength $z$ for each pair of users. We consider $L=7$ SCBSs distributed randomly within a square area of $2 \,\text{km} \times 2 \,\text{km}$. The number of active RBs for SCBS to UE link, SCBS to SUE link, and SUE to UE link, respectively, are $N_1=5, N_2=3$, and $N_3=5$, unless stated otherwise. Here, we would like to note that depending on the channel state information and social ties among users, spectrum partitioning can be done dynamically and the proposed model is not limited to any specific resource partitioning. In this work, however, we assume that throughout the resource allocation, neither the social tie matrix $\boldsymbol{Z}$, nor the channel state information are changing and therefore, the partitions are static. Each RB is composed of $12$ consecutive subcarriers, each of which having a $15$ KHz bandwidth, according to 3GPP Rel-12 Standard \cite{3gpp09:lte}. The transmit power of SCBSs and SUEs are set to $2$ W and $10$ mW, respectively. The wireless channel experiences Rayleigh fading, with the propagation loss set to $3$. The receivers' noise is assumed Gaussian with zero mean and with variance equals to $-90$ dBm. The weighting parameters, $\alpha_m$, $\beta_n$, $\nu_n$ and $\kappa_n$ are set to half of the RB bandwidth. Throughout the simulations, the unmatched users are assigned a zero utility. All statistical results are averaged over a large number of independent runs for different locations and channel gains.

For comparison purposes, we compare our proposed approach with two centralized approaches: 1) the context-aware centralized solution which aims to maximize the overall utilities of all UEs and RBs, 2) the context-unaware centralized approach, that maximizes the throughput of the users. In simulation results, the centralized solutions refer to the linear programming relaxation of the original 0-1 integer programming problem in (\ref{opt11})-(\ref{opt15}) by letting  $0\leq\xi\leq1$. To obtain centralized solutions in our simulations, we used the YALMIP toolbox of MATLAB. In addition, we compare our results with the context-unaware distributed algorithm that is based on the one-to-one matching game similar to our proposed algorithm, however, no social context is incorporated. That is, UEs and RBs rank one another only based on the maximum SINR values. This benchmark algorithm is in line with some existing works such as \cite{89} and~\cite{27}.

In addition, we show the effect of context-awareness by comparing the offloaded traffic of both approaches as one of the main performance metrics in our results. We define the offloaded traffic as the number of users who will be served directly by data that is cached in a directory or folder at the level of the SUE. The users obtain this offloaded traffic via D2D communications without having to use the SCN's infrastructure, due to the correlation in their requests for content as discussed in Section \ref{sec:3}. To compute this offloaded traffic, we must find the probability $P_m(y_d=1)$ of requesting content that already exists in the directory of SUE $m_s$ by each UE $m \in \mathcal{C}_{m_s}$:\vspace*{-0cm}
\begin{align}\label{eq:24}
P_m(y_d=1)=\sum_{d \in \mathcal{D}_{m_s}}P_m(y_d=1|d \in \mathcal{D}_{m_s})P(d \in \mathcal{D}_{m_s}),
\end{align}
where $d$ and $\mathcal{D}_{m_s}$ denote, respectively, the requested file and the set of files of the SUE $m_s$'s directory. 
The Prior information, $P(d \in \mathcal{D}_{m_s})$ in (\ref{eq:24}), depends on both the history of requested files in previous time slots and on the mutual social tie between all members of the SC. Finding a closed-form solution for (\ref{eq:24}) to model the correlation between users in the time domain is difficult. Thus, for simplicity, we assume that requesting a file from the directory of an SUE can be modeled as an interaction $y_d$ between the user and its SC set. Motivated by  (\ref{eq:c}), we model $P_m(y_d=1)$ as a function of cluster's average social tie $\bar{z}_{m_s}=\frac{1}{|\mathcal{C}_{m_s}|-1}\sum_{m \in \mathcal{C}_{m_s}}z_{mm_s}$, as follows\vspace*{-0.2cm}
\begin{align}\label{eq:25}
P_m(y_d=1)=\frac{1}{1+e^{(-\rho\bar{z}_{m_s})}}; \,\,\,\,\,\forall m \in \mathcal{C}_{m_s},
\end{align}
where $\rho$ is a constant normalizing parameter. Equation (\ref{eq:25}) allows to relax the dependencies between users by considering average social tie of the cluster.\vspace{-0cm}
\subsection{Simulation Results and Discussions}
\begin{figure}[!t]
  \begin{center}
   \vspace{0cm}
    \includegraphics[width=\columnwidth]{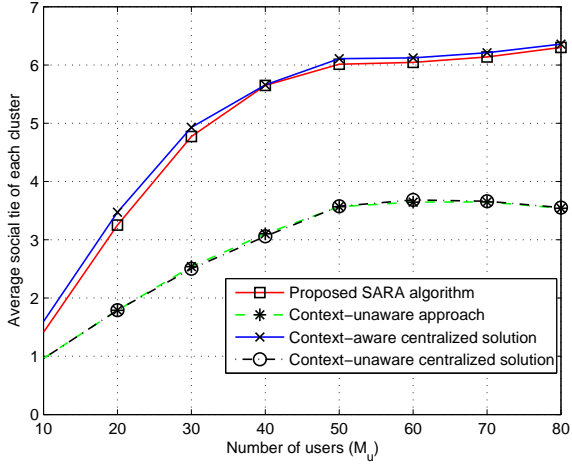}
    \vspace{-0.3cm}
    \caption{\label{fig6} Comparison of the average social tie of SCs for proposed SARA algorithm with other three approaches.}
  \end{center}\vspace{-.5cm}
\end{figure}

In Fig. \ref{fig6}, we show the average social tie in the final clusters resulting from the proposed approach and the other three algorithms. Clearly, from this figure, we can observe that the members of the clusters in average are much more socially connected in SARA compared to the context-unaware approach. Fig. \ref{fig6} shows that the average cluster social tie is up to $77 \%$ higher for the proposed SARA algorithm relative to the context-unaware scenario for $M_u = 70$ UEs. 
As the number of users increases and the SCBS resources become more scarce, UEs have no choice but to join the D2D clusters. However, Fig. \ref{fig6} shows that by using the proposed SARA algorithm, the UEs will be more socially connected within their clusters and can benefit more from the social interconnections of one another.
  \begin{figure}[!t]
  \begin{center}
   \vspace{0cm}
    \includegraphics[width=\columnwidth]{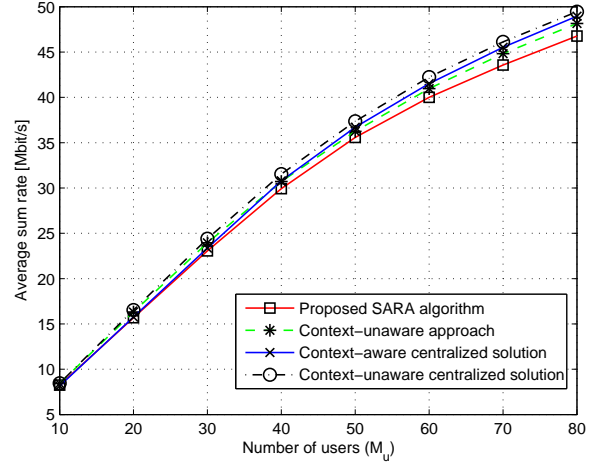}
    \vspace{-0.3cm}
    \caption{\label{fig8} Comparison of the average sum rate of the proposed SARA algorithm and other three approaches.}
  \end{center}\vspace{-.5cm}
\end{figure}

In Fig. \ref{fig8}, the average sum rate of all users is compared for the SARA algorithm with the distributed context-unaware approach and centralized approaches. It is not surprising to see that the average sum rate of our proposed approach is slightly below that of an algorithm that is focused on optimizing only the data rate. Fig. \ref{fig8} shows that the gap between the two algorithms does not exceed $3\%$ for all network sizes. However, as will be shown in Fig. \ref{fig9}, this small loss in average rate will be compensated by having more offloaded traffic in the downlink of the SCN. Interestingly, the rate performance of the proposed approach is very close to the centralized solution and the gape between the two algorithms does not exceed $4 \%$ for all network sizes.

Fig. \ref{fig9} compares the average offloaded traffic at each time slot for the proposed algorithm and other three approaches as the number of users varies. We note that for the context-unaware approaches, the contents stored in the SUE's directory do not depend on the average social tie of the social cluster. Considering $\rho=0$ in (\ref{eq:25}) allows us to model this independence. The average offloaded traffic determines how many of UEs in average could be served directly by data that is cached in the SUEs' directories. The offloaded traffic increases with the number of users, since more UEs will move to the D2D tier. However, this metric will saturate for the large network sizes, since the number of RBs $n \in \mathcal{N}_3$ is limited. We can see that the proposed approach achieves a very close performance to the
context-aware centralized solution, in terms of traffic offloads, for the network size with more than $M_u=40$ UEs. Moreover, we observe that the performance of the context-unaware approach coincides with the central solution. This is primarily due to the fact that context-unaware approach follows the deferred acceptance algorithm which is known to achieve optimal solution for the proposing players (i.e. users) \cite{79,27}. In Fig. \ref{fig9}, we can see that the proposed SARA algorithm outperforms the context-unaware approach by increasing the offloaded traffic for different network sizes. Fig. \ref{fig9} shows that the proposed SARA algorithm can offload up to $84 \%$ more traffic from the SCN's infrastructure when compared to the context-unaware approach for a network with $M_u=70$ UEs.
\begin{figure}[!t]
  \begin{center}
   \vspace{0cm}
    \includegraphics[width=\columnwidth]{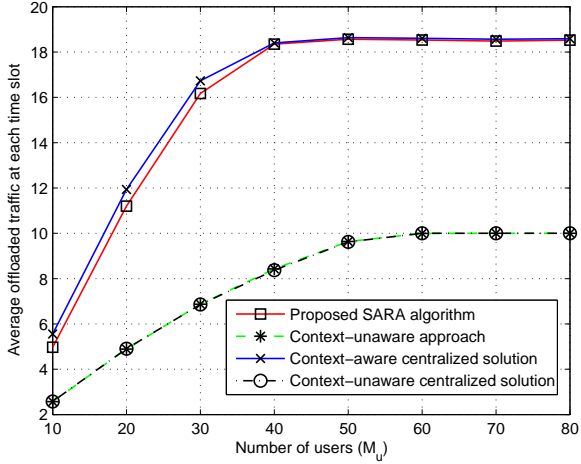}
    \vspace{-0.3cm}
    \caption{\label{fig9} Comparison of the average offloaded traffic between the proposed SARA algorithm and other three approaches. The $\rho=0.5$ and $\rho=0$ is assumed, respectively, for context-aware algorithms and context-unaware approaches.}
  \end{center}\vspace{-.5cm}
\end{figure}
 \begin{figure}[!t]
  \begin{center}
   \vspace{0cm}
    \includegraphics[width=\columnwidth]{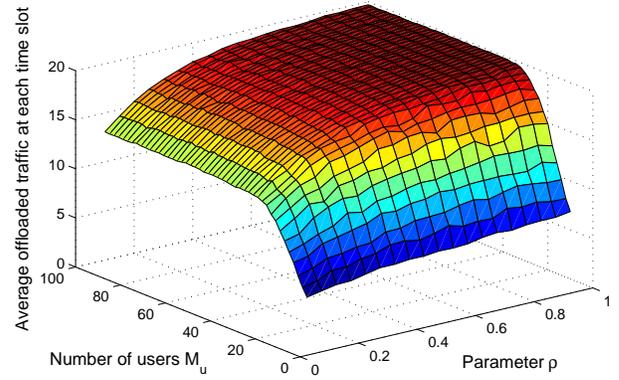}
    \vspace{-0.3cm}
    \caption{\label{fig10} The average offloaded traffic resulted by proposed SARA algorithm, versus different $M_u$ and different values of $\rho$.}
  \end{center}\vspace{-.5cm}
\end{figure}

In Fig. \ref{fig10}, we show the average offloaded traffic for different values of the $\rho$ parameter in (\ref{eq:25}) and different number of users. One interesting observation from Fig. \ref{fig10} is that the proposed SARA algorithm offloads more traffic as the network size grows and eventually saturates, since the number of RBs is fixed. In fact, Fig. \ref{fig10} implies that, at the cost of a slight reduction in the average rate of the matched users (see Fig. \ref{fig8}), the proposed SARA algorithm allows to admit more users into the network. Hence, the proposed approach can be used to optimize the tradeoff between serving more users in a congested network and the average rate of current users. By properly setting the control parameters in the utility functions, this tradeoff can be balanced according to the traffic of the network.
\begin{figure}[!t]
  \begin{center}
   \vspace{0cm}
    \includegraphics[width=\columnwidth]{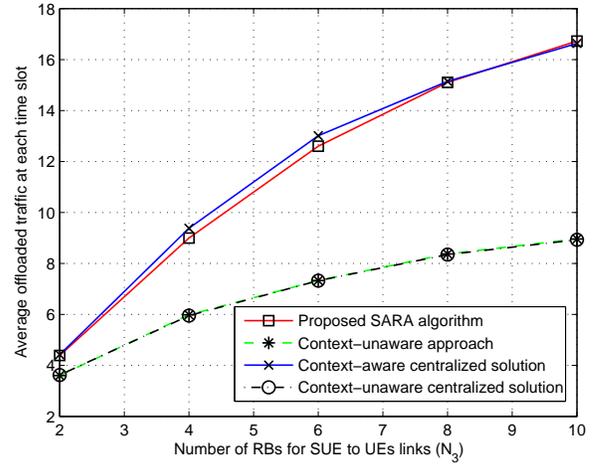}
    \vspace{-0.3cm}
    \caption{\label{fig11} Comparison of the average offloaded traffic for the proposed SARA algorithm with other three approaches, versus different number of RBs $n \in \mathcal{N}_3$. The values $M_u=30$ and $N_1=5$ is assumed and the parameter $\rho$ is set to $0.1$. }
  \end{center}\vspace{-.5cm}
\end{figure}

\begin{figure}[!t]
  \begin{center}
   \vspace{0cm}
    \includegraphics[width=\columnwidth]{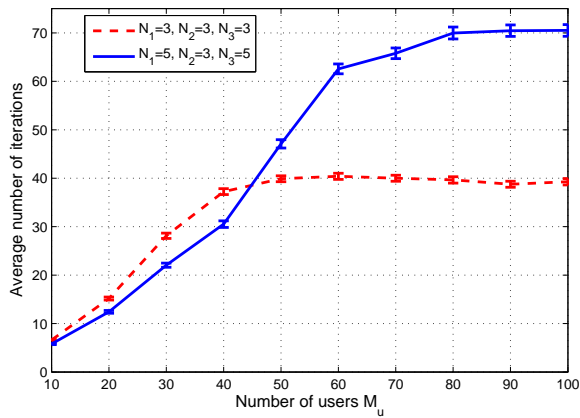}
    \vspace{-0.3cm}
    \caption{\label{fig7} Average number iterations of the proposed SARA algorithm vs network size, for different number of RBs. The error bars indicate the $95 \%$ confidence.}
  \end{center}\vspace{-.5cm}
\end{figure}

In Fig. \ref{fig11}, we show the average offloaded traffic resulting from the proposed SARA algorithm and other approaches as the number of RBs $n \in \mathcal{N}_3$ varies for a network with $M_u=30$ UEs. This figure shows that, as the number of SUE resources increases, the context-aware approach achieves better offload performance. That is due to the fact that more users with shared interests can join the same cluster which results to form stronger social clusters. In Fig. \ref{fig11}, we can see that the gap between the offloaded traffic of the proposed SARA algorithm and the context-unaware approach consistently increases as the network sizes grows. Fig. \ref{fig11} shows that the proposed SARA algorithm can offload up to $78 \%$ more traffic from the SCN's infrastructure when compared to the context-unaware approach for a network with $N_3=8$ RBs.

Fig. \ref{fig7} shows the average number of iterations resulting from the proposed SARA algorithm versus the network size $M_u$, for two different RB numbers. In this figure, we can see that, as the number of UEs increases, the average number of iterations increases due to the increase in the number of players. Fig. \ref{fig7} demonstrates that the proposed matching approach has a reasonable convergence time that does not exceed an average of $72$ iterations for all network sizes and with $N_1=5$, $N_2=3$, and $N_3=5$ number of RBs. The maximum of the average number of iterations reduces to $40$, when there are $N_1=3$, $N_2=3$, and $N_3=3$ RBs.
\vspace{-0cm}
\section{Conclusion}\vspace{-0cm}
In this paper, we have presented a novel approach for context-aware resource allocation in D2D-enabled small cell networks. We have formulated the context-aware resource allocation problem as a one-to-one matching game and we have shown that the game exhibits peer effects. To solve the game, we have proposed a distributed social-aware resource allocation algorithm that exploits the physical layer metrics of the wireless network along with the users' social ties from the underlaid social network. Then, we have shown that the proposed algorithm is guaranteed to converge to a two-sided stable matching between the users and the network's resource blocks. Simulation results have shown that the proposed matching-based algorithm yields socially well-connected cluster between D2D links, thus, allowing to offload significantly more traffic than conventional context-unaware approach.
The results provide novel insights into the gains that future wireless networks can achieve from exploiting social context. The results show that with manageable complexity, the proposed context-aware approach can substantially improve the wireless resource utilization by offloading a large amount of traffic from the backhaul-constrained small cell network. This work can be extended to dynamic resource allocation in which the spectrum partitioning may vary, depending on the social tie strength among users.
\vspace{-0cm}
\bibliographystyle{IEEEbib}
\bibliography{references}

\end{document}